\newcommand{\oti}{}
\newcommand{\MRW}{M_{\mbox{\tiny RW}}}
\newcommand{\gRW}{g_{\mbox{\tiny RW}}}
\newcommand{\bm}[1]{\mbox{\boldmath $#1$}}
\def\defi{:=}
\def\onman#1{{\bm{#1}}}
\def\dim{n}
\def\dimu{\dim +1}
\def\signature{\varepsilon}
\def\curvN{\epsilon_0}
\def\UU{\mathcal{U}}
\def\intervalI{I}
\def\intervalIo{I_\oo}
\def\spaceN{\Sigma}
\def\SigmaUU{\Sigma}
\def\SigmaUUo{\SigmaUU}
\def\domainl{\mathcal{V}_\oo}
\newcommand{\oo}{{\mathfrak o}}
\def\GammaUU{\Upsilon}
\def\gS{\gamma}
\def\g{\gamma}
\def\GamVV{Q}
\def\expansion{\Theta}
\def\expanshom{\theta}
\def\ptos{=}
\def\Xtfol{X}
\def\Ytfol{Y}
\def\W{W_{\oo,X}}
\def\flow{U}
\def\newX{X'}
\def\X{{\mathfrak X}}
\def\ckvS{\zeta}
\def\FF{F}
\def\kill{k}
\def\UUumbilic{\varkappa}
\def\curvS{\kappa}
\def\curv{k}
\def\UURW{\Xi}
\def\Yy{Y}
\def\YyS{y}
\def\lS{x}
\def\xsq{|x|^2}
\def\ysq{|\YyS|^2}
\def\lsq{|\lambda|^2}
\def\zsq{|z|^2}
\def\ta{\alpha}
\def\tb{\beta}
\def\tc{\mu}
\def\td{\nu}
\def\Hess{\mathrm{Hess}}
\def\grad{{\rm{grad}}}
\def\gradS{\grad_\gS}
\def\nablag{\nabla^{g}}
\def\conffactor{\left(1+\frac{\onman\curv}{4}\lsq\right)}
\newtheorem{theorem}{Theorem}[section]
\newtheorem{lemma}[theorem]{Lemma}
\theoremstyle{definition}
\newtheorem{definition}[theorem]{Definition}
\newtheorem{remark}[theorem]{Remark}
\numberwithin{equation}{section}
\theoremstyle{plain}
\newtheorem{proposition}[theorem]{Proposition}
\newcounter{mnotecount}
\newcommand{\mnote}[1]
{\protect{\stepcounter{mnotecount}}$^{\mbox{\footnotesize $\bullet$\themnotecount}}$ 
\marginpar{
\raggedright\tiny\em
$\!\!\!\!\!\!\,\bullet$\themnotecount: #1} }
\begin{document}

\title{New characterization of Robertson-Walker geometries involving
a single timelike curve}

\author[1]{Marc Mars \thanks{marc@usal.es}}
\author[2]{Ra\"ul Vera \thanks{raul.vera@ehu.eus}}
\affil[1]{Instituto de F\'{\i}sica Fundamental y Matem\'aticas, IUFFyM,
Universidad de Salamanca, 
Plaza de la Merced s/n, 37008 Salamanca, Spain}
\affil[2]{Fisika Saila, Euskal Herriko Unibertsitatea UPV/EHU,\protect\\
  48080 Bilbao, Basque Country}

\maketitle

\begin{abstract}
  Our aim in this paper is two-fold. We establish a novel geometric characterization of the Roberson-Walker (RW) spacetime and, along the process, we find a canonical form of the RW metric associated to an arbitrary timelike curve and an arbitrary space frame.
  A known characterization establishes that a spacetime
  foliated by constant curvature leaves whose orthogonal flow
  (the cosmological flow) is geodesic, shear-free,
  and with constant expansion on each leaf, is RW.
  We generalize this characterization by relaxing the condition
  on the expansion. We show it suffices to demand that
  the spatial gradient and Laplacian of the cosmological expansion
  on a single arbitrary timelike curve vanish.
  In General Relativity these local conditions
  are equivalent to demanding that the energy flux measured
  by the cosmological flow, as well as its divergence, are zero
  on a single arbitrary timelike curve.
  The proof allows us to construct canonically adapted coordinates
  to the arbitrary curve, thus well-fitted to an observer
  with an arbitrary motion with respect to the cosmological flow.
\end{abstract}

\tableofcontents
\section{Introduction}

The basis for the standard model of the Universe is the Robertson-Walker (RW)
spacetime. The geometric nature of gravity makes the geometric characterization of spacetimes of great importance. Many efforts have been put in obtaining physically meaningful characterizations of the RW spacetime. Different characterizations probe different aspects the theory. Thus, finding new characterizations 
may be valuable to examine the foundations of the model.
We refer the reader to the extensive review in \cite{RelCosm}.

The characterizations of the RW geometry involve several kinds of
ingredients with different geometrical/physical/observational weight.
The first basic ingredient is the existence of isometries \cite{ONeill2,SolutionsBook}.
Another key ingredient is ``kinematical'',
in terms of the existence of a family of fundamental observers
defined as the (cosmological) flow $\flow$
moving with the average matter in the Universe,
to which certain requirements are imposed.
The dynamical ingredients 
are represented  by the Einstein
field equations once the energy-momentum tensor $T$ is constrained
by any means.
Other ingredients, that are ultimately linked
with the above, involve conditions on the Weyl tensor
or on the relation between the angular-diameter (or luminosity) 
distance and redshift \cite{Hasse-Perlick}.
The latter is an example where observational
constraints due to the isotropy of the measurements can be used to restrict the model \cite{Maartens2011}. Another example is the isotropy of the
Cosmic Microwave Background radiation that leads to the Ehlers-Geren-Sachs \cite{EGS} characterization of RW.

The different known characterizations resort on a combination of various of the above items
depending on which observational facts and principles, e.g. 
Cosmological or Copernican, are used to motivate them.
The paradigmatic characterization is due to Ellis \cite{Ellis1971}
and states that RW is characterized by the energy-momentum $T$ being isotropic (and thus of a perfect fluid type)
and the flow $\flow$ along the timelike eigendirection of $T$
being irrotational, shear-free and geodesic. Another characterization (see \cite{Krasinski} [Section 1.3.2]) that will play an important role later states that  a spacetime that
(a) is foliated by spaces of constant curvature and (b)
the orthogonal flow is geodesic, shear-free and the expansion is  constant
on each leaf, must be RW.

Let us note that in some cases not all the assumptions involved in the characterizations are spelled out in full detail or with sufficient accuracy.
Regarding the use of isometries,
we refer the reader to \cite{Avalos2022} for a  critical
analysis on the characterizations based mainly on the existence of different notions of isotropies.

It must be also stressed that,
although the aim of the mentioned results
(and the present work) is to characterize FLRW,
there exist alternatives to the
Cartan-Karlhede characterization scheme (see Chapter 9 in \cite{SolutionsBook})
or modifications thereof that provide, in particular,
characterizations for different families of FLRW geometries. That
can be used to consider the equivalence problem between different FLRW.
The IDEAL\footnote{Acronym for `Intrinsic, Deductive, Explicit and ALgorithmic' introduced in \cite{Ferrando2010}.} characterization of FLRW
\cite{Canepa2018} is based entirely on Rainich-type tensorial equations
(see \cite{Coll1989} in a more general setting) without relying on
the geometric structure of any (timelike) congruence.
On the other hand, the classification algorithm constructed in \cite{Wylleman2019} is based
on invariant quantities relative to a congruence of fundamental observers.

All characterizations based on isotropy properties around one single fundamental observer rely ultimately on the Copernican principle in order to extend these properties  to all fundamental observers, thus to the cosmological flow $\flow$. In this way, the properties are made global throughout the Universe and the conclusion that the geometry is RW  everywhere can be reached. In fact, all characterizations of RW spaces need a certain amount of global properties. In this paper we aim at finding a characterization result that splits the hypotheses in two sets, namely conditions that hold everywhere and conditions that hold on a single curve. As already said, global hypotheses are unavoidable, but in this paper we aim at weakening them. The assumption of global type that we make concern the geometry of the slices of constant cosmological time, i.e. the spaces orthogonal to $\flow$. Specifically we require that every slice at constant time $t$ has constant curvature $k(t)$, that its second fundamental form is pure trace and that the cosmological time has constant lapse. The last two conditions can be stated equivalently in terms of  $\flow$  by demanding that this field is shear-free and geodesic (compare with item (b) in the characterization above).
Let us stress that the curvature of the slices as a function of
  time $k(t)$ is left free, and can, in principle, change sign. This fact
  is clearly consistent locally, and there are even explicit global smooth constructions \cite{Sanchez2023}.

The  distinctive feature of our characterization  is the presence of a {\it local} condition based on a single timelike curve $\oo$. This curve is completely arbitrary so it need not correspond to a fundamental observer (although, of course, it may be). In fact, we can even dispense of the condition that the curve is timelike and consider any curve with the only restriction that it is
always transverse to the cosmological time. The condition that we impose is that the mean curvature of the constant time slices have vanishing gradient and vanishing Laplacian along~$\oo$. An equivalent way of stating this is that the expansion of the cosmological flow
$\flow$ has vanishing gradient and vanishing Laplacian along the chosen curve. Note that this condition is  much weaker than imposing that the expansion is homogeneous in space.
Interestingly, the local condition that we impose on the curve $\oo$ can also be stated as a restriction on the energy-momentum contents. Assuming the validity of the Einstein field equations, this condition can be rephrased as demanding that the energy flux measured by the fundamental observer {\it along the curve $\oo$} vanishes, together with its divergence. We find it remarkable that the global assumptions can be relaxed significantly and the extra parts be replaced by a very weak and physically reasonable condition on a single curve.

The precise statement of this characterization result appears in Theorem \ref{main1} below. We provide here a slightly more informal version.

\begin{theorem}
  \label{main_informal}
  Let $(M,g)$  be a spacetime of the form $M = I \times \Sigma$, where
  $I\subset \mathbb{R}$ is an open interval. Assume that each $\SigmaUU_t := \{ t\} \times \Sigma$ is spacelike and let $g_t$ be the induced metric and
  $\flow$  its future directed unit normal. Assume that $g_t$ has constant curvature $k(t)$ and that $\flow$ is shear-free and geodesic. Then $(M,g)$ is a Robertson-Walker space if and only if there is a timelike curve $\oo(t)$ along which the expansion $\expansion$ of $\flow$ satisfies
  \begin{align}
    \grad_{g_t} \expansion |_{\oo(t)}=0, \qquad \Delta_{g_t} \expansion |_{\oo(t)} =0.
  \end{align}
  \end{theorem}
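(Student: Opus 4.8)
The ``only if'' direction is immediate: in a Robertson--Walker spacetime the expansion $\expansion$ of the cosmological flow depends on $t$ alone, so $\grad_{g_t}\expansion$ and $\Delta_t\expansion$ vanish identically on every slice, and a fortiori along any transverse curve. The substance is the converse, and the whole argument reduces to showing that the two conditions on $\oo$ force $\expansion$ to be spatially constant on each slice. The plan is to begin by fixing Gaussian normal coordinates adapted to the foliation. Since $\flow$ is geodesic, after reparametrising $t$ the lapse is unity and $g=-dt^2+g_t$ with $\flow=\partial_t$; since $\flow$ is shear-free, the second fundamental form is pure trace, so the evolution of the slice metric is purely conformal,
\begin{equation}
\partial_t (g_t)_{ab} = \frac{2\expansion}{\dim}\, (g_t)_{ab},
\label{eq:evol}
\end{equation}
with $\expansion=\expansion(t,x)$. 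Equation \eqref{eq:evol} says that all slices lie in a common pointwise conformal class, $g_t=\Phi(t,x)^2 g_{t_0}$, and that $\expansion=\dim\,\partial_t\log\Phi$. The target then becomes to integrate \eqref{eq:evol}: if $\expansion=\expansion(t)$, then $\Phi$ is a function of $t$ alone and $g_t=a(t)^2 g_{t_0}$, which is exactly the RW form.

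The key geometric input is the rigidity forced by constant curvature. Each slice $g_t$, being of constant curvature, is conformally flat, and by \eqref{eq:evol} the conformally flat chart of one slice serves for all of them: in such a chart $g_t=\psi(t,x)^{-2}\,\delta$ for a single flat metric $\delta$ and a positive function $\psi$. Requiring that $\psi^{-2}\delta$ have constant curvature amounts, through the conformal transformation law of the Ricci tensor, to the vanishing of the trace-free flat Hessian of $\psi$; an elementary integration of $\partial_a\partial_b\psi\propto\delta_{ab}$ then forces $\psi$ to be an isotropic quadratic,
\begin{equation}
\psi(t,x) = A(t) + B(t)\cdot x + C(t)\,|x|^2, \qquad k(t) = 4A(t)C(t) - |B(t)|^2 .
\label{eq:canform}
\end{equation}
This is the canonical form announced in the abstract, and establishing it cleanly—for all signs of $k$ at once, and keeping track of the residual conformal freedom that is later used to adapt the chart to $\oo$—is the main obstacle. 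From \eqref{eq:canform} one reads off
\begin{equation}
\expansion(t,x) = \dim\,\partial_t\log\psi^{-1} = -\dim\,\frac{\dot A + \dot B\cdot x + \dot C\,|x|^2}{A + B\cdot x + C\,|x|^2}.
\label{eq:theta}
\end{equation}

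Finally I would impose the two local conditions. Fix a time $t_\ast$ and, by a time-independent translation of the spatial chart, place $\oo(t_\ast)$ at the origin (this preserves \eqref{eq:canform}, only replacing $A,B$ by the value and gradient of $\psi$ at $\oo(t_\ast)$). Differentiating \eqref{eq:theta}, a direct computation at the origin gives
\begin{equation}
\grad_{g_t}\expansion|_\oo \;\propto\; \dot A\,B - A\,\dot B ,
\label{eq:grad}
\end{equation}
and, since $\grad_{g_t}\expansion|_\oo=0$ kills the first-order conformal term of the Laplace--Beltrami operator of $g_t=\psi^{-2}\delta$, the surviving flat part yields
\begin{equation}
\Delta_t\expansion|_\oo \;\propto\; A\,\dot C - C\,\dot A .
\label{eq:lap}
\end{equation}
Thus the two hypotheses force $\dot B=\lambda B$ and $\dot C=\lambda C$ with $\lambda:=\dot A/A$, i.e. $(\dot A,\dot B,\dot C)=\lambda\,(A,B,C)$ at $t_\ast$. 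By \eqref{eq:theta} the numerator of $\expansion$ is then proportional to its denominator, so $\expansion(t_\ast,\cdot)=-\dim\,\lambda(t_\ast)$ is constant on the slice. As $t_\ast$ is arbitrary, $\expansion=\expansion(t)$ everywhere, which is precisely the hypothesis of the classical characterization recalled in the Introduction; equivalently, integrating \eqref{eq:evol} directly gives $g_t=a(t)^2 g_{t_0}$, so $(M,g)$ is RW. The point demanding care is exactly the arbitrariness and possible motion of $\oo$: the reduction must be performed at each $t_\ast$ with a \emph{fixed} spatial origin, since a $t$-dependent recentring would contaminate the $\partial_t$ derivatives appearing in \eqref{eq:theta}.
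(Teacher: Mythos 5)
Your core computation is correct, and it takes a genuinely different route from the paper. Where the paper constructs stereographic coordinates on each leaf via a Hessian basis (Lemma \ref{properties}), decomposes the tangential part of the normal flow in the finite-dimensional conformal Killing algebra of the leaves (Proposition \ref{Vtangent} and \eqref{eq:V=}), and reads off homogeneity of the umbilicity function from the coefficients $C_0$, $C_A$ (Lemma \ref{RWconds}), you work directly in a conformally flat chart $g_t=\psi^{-2}\delta$ common to all leaves, show that constant curvature forces $\psi=A(t)+B(t)\cdot x+C(t)|x|^2$, and translate the two curve conditions into $(\dot A,\dot B,\dot C)\parallel(A,B,C)$, hence spatial constancy of $\expansion_t$. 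The two mechanisms are really the same algebra in different clothes: your $(A,B,C)$ play the role of the paper's Hessian-basis functions $\{\Yy^0,\Yy^A,\Yy^{\dim+1}\}$, and your proportionality statement is the analogue of \eqref{eq:main_rel} specialized to $C_A=0$ and $2C_0\curv+\dot\curv=0$. Your derivation is more elementary and self-contained, at the price of working in a preferred flat chart rather than invariantly; your computations of the gradient and Laplacian at the origin, and the care about keeping the recentring time-independent, are all sound.

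There is, however, a genuine gap at the very end: the step ``as $t_*$ is arbitrary, $\expansion=\expansion(t)$ \emph{everywhere}''. Every chart you use is local. Gaussian normal coordinates with $\flow=\partial_\tau$ exist only near the curve (in the product coordinates of $M=I\times\Sigma$ the unit normal generically carries a shift, and its flow need not be complete or onto), and a constant curvature leaf is only \emph{locally} conformally flat: for positive curvature a stereographic chart necessarily misses a point, and the paper assumes neither completeness nor simple connectedness of the leaves. So your argument proves that $\expansion_t$ is constant on a neighbourhood of $\oo(t)$ inside each leaf, i.e. that the metric is RW on a neighbourhood $\domainl$ of the curve, and nothing more. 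Closing precisely this gap occupies the last part of the paper's proof of Theorem \ref{main1}: inside $\domainl$ the ``only if'' direction shows that \emph{every} transverse curve $\oo'$ satisfies the same two conditions; homogeneity of the constant curvature leaves gives a radius $r(t)$, uniform over each leaf, within which the canonical chart centered at any point is defined; and connectedness of the leaves then allows one to chain overlapping domains $\domainl'$ until the whole of $\UU$ is covered, after which Theorem \ref{Miguel} applies. Without this propagation argument (or some substitute, such as an overdetermined-PDE or analyticity argument valid on the entire leaf), your proof establishes a strictly weaker statement than the theorem asserts.
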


  The main idea behind the proof of this theorem is to show that the local condition  along the curve together with the global assumptions are sufficient to prove that that expansion $\expansion$ of $\flow$ must be constant on each slice $\SigmaUU_t$. We can then rely on the known characterization mentioned above to reach the conclusion.

  The method of proof is based on the construction of stereographic coordinates at each
  $\SigmaUU_t$ centered at the point $\oo(t)$ and associated uniquely to an arbitrary choice of orthonormal frame $\{X_A(t)\}$ at $\oo(t)$. These coordinates not only play a crucial role in the proof, but they also have an added bonus. Once the characterization result is obtained, we have as a by-product the form of the  RW metric written in terms of coordinates canonically constructed from an arbitrary timelike curve $\oo(t)$ (actually any tranverse curve) and an arbitrary choice of orthonormal space frame $\{ X_A(t)\}$ along $\oo$. The form of the metric involves, besides the
  scale factor $a(t)$ and the discrete curvature parameter $\curvN \in \{ -1,0,1\}$, three\footnote{Actually we work in arbitrary dimension, but for simplicity we restrict to the four-dimensional case in this description.}
  functions $\FF_{0A} (t)$ and three functions $\FF_{AB} (t ) = - \FF_{BA}(t)$ associated to the freedom in the curve $\oo(t)$ and the frame
  $X_{A}(t)$. In fact, the functions $\FF_{0A}$ describe the
  velocity of the observer along the curve with respect to the
  cosmological flow $\flow$, and we show that by an appropriate choice of frame we can
  always set $\FF_{AB}(t)=0$ if so desired. The precise form of the metric in these coordinates is the content of Theorem \ref{main2}. The form of the metric is of course more complicated than any of the standard ones, but the fact that it is canonically adapted to a single arbitrary observer  makes it interesting and potentially useful in problems in Cosmology where a privileged observer not at rest with the cosmological flow is involved,
as it happens with our direct observations \cite{Lineweaver1996a}
(e.g. any Earth-based or satellite telescope).
To be precise, the dipole contribution to the CMB due to our peculiar velocity
with respect to the cosmological flow
can thus be incorporated at the background level of
a perturbative approach by setting the peculiar
  velocity as the three-velocity $v^A=-a^{-1}(t)F_{0B}(t)\delta^{AB}\partial_{z^B}$
  in the coordinates of  Theorem \ref{main2} (see Remark \ref{velocity}).
That could help on the disentangling the
degeneracy of the Doppler effect, due to the peculiar velocity
and a dipolar part of the perturbations (see e.g. \cite{Roldan2016}).

  The plan of the paper is as follows. In Section \ref{Definitions} we give the precise definition of RW space that we shall use. The definition is standard except that we allow for both Lorentzian and Riemannian signature, since this generality entails essentially no extra effort. We then quote a known characterization result  of RW geometries that will play a key role in our argument. In Section \ref{sec:geo_H} we describe the geometry of spaces that admit a foliation of constant curvature and umbilic leaves. The geometric framework in this section is more general than in the rest of the paper. Although we could have simplified this part, there are several applications that we have in mind where the more general framework is needed. The key result of this section
  is Proposition \ref{Umbilicfoliation} where canonical coordinates adapted to a transverse curve $\oo(t)$ and an orthonormal frame are obtained. On each leaf of the foliation these coordinates are stereographic, so in order to make the paper self-contained we provide in Appendix \ref{sec:constant_curv_spaces} a very simple and direct construction of stereographic coordinates on any Riemannian space of constant curvature.
  Section  \ref{NewCharac} contains our main results.  Theorem \ref{main1} is the characterization result of RW involving a transverse curve. Theorem \ref{main2} provides the RW metric in canonical coordinates adapted to
  $\{ \oo(t)\}$ and $\{ X_A(t)\}$. Subsection  \ref{Description} is devoted to describing several geometric properties of these coordinates, in particular the relation of the functions $\FF_{AB}(t)$ to rotations of the frame $\{ X_A(t)\}$. In Subsection \ref{Sub:Killings}
  the explicit form the Killing vectors of RW in the newly constructed coordinates is presented.

\subsection{Basic notation}

The set of vector fields on a manifold $\UU$ is denoted by
$\X(\UU)$. We use $\pounds_X$ for the Lie derivative along a vector field $X \in \X(\UU)$.
Given a function $f \in C^{\infty}(M,\mathbb{R})$ in a semi-Riemannian space $(M,g)$ we use
$\grad_g f$ to denote its gradient vector field and
$\Delta_g f$ for the Laplacian. We use both abstract-index or index-free notation depending on our convenience. Capital Latin indices $A,B,C$ take values $1, \cdots, \dim$ and Greek indices  in $0,\cdots, \dim +1$. The symbol $\delta_{AB}$ stands for the Kronecker delta.

\section{Definitions and a previous characterization result}
\label{Definitions}
In this paper we adopt the following definition of an $\dim+1$ dimensional RW geometry:
 \begin{definition}[RW space]
\label{def:RW}
Let $\signature = \pm 1$, $\intervalI \subset \mathbb{R}$ an open interval and $(\spaceN,g_{\curvN})$ a (positive definite)
Riemannian manifold of dimension $\dim \geq 1$ and constant curvature $\curvN \in \{ -1,0,1 \}$\footnote{When $n=1$ there is no curvature, and we may choose the value  of $\curvN \in \{-1,0,1\}$ arbitrarily.}.
A \textbf{RW space}, denoted by $\signature I \times_a \spaceN$,  is the manifold $M = I \times \spaceN$
endowed with the warped product metric
$g = - \signature d\tau^2 + (a \circ \tau)^2 g_{\curvN}$, where $\tau \in C^{\infty}(M, I)$
is the projection to the first factor and 
$a \in C^{\infty}(I,\mathbb{R}^+)$ is the warping function.
\end{definition}
This definition is standard except for the fact that we are allowing any sign in $\signature$, which means that $g$ can be either of Lorentzian ($\signature=1$) or Riemannian ($\signature =-1$) signature. Our primary interest is in the Lorentzian case, but with essentially the same effort we can deal with both cases.
Note that we are not imposing any global assumption on the base space
  $(\spaceN, g_{\curvN})$, such as
  completeness or simply connectednes.

This definition is sufficiently general for our characterization purposes, and it is also well-adapted to a known characterization result that will play a relevant role later. Characterization results can be either local or global. One speaks of a local characterization when the assumptions made on the space $(M,g)$ under consideration  are sufficient to show that there exists a
  RW space $(\MRW,\gRW)$ such that,  at  every point $p \in M$ there is an open
  neighbourhood of $W_p$ of $p$ and a map $\Phi_p : (W_p,g|_{W_p}) \to  (\MRW,\gRW)$ which is an isometry from $(W_p,g|_{W_p})$ onto its image. One simply says that
  $(M,g)$  is {\it locally} a RW space. The characterization is global if there exists an isometry $\Phi: (M,g) \to (\MRW,\gRW)$. Obviously, global characterizations are stronger, so they also need stronger assumptions.

There are several local and global characterization results of RW spaces.
We refer to \cite{Avalos2022} for a detailed recent account that discusses many of them (and corrects some misleading or incorrect statements in the literature). The one that will be relevant for this paper was stated in rigorous form in \cite{Miguel_RW} in the more general context of generalized RW spaces. This result had a local and a global version. The global version was then improved in \cite{Caballero2011} building on previous results
in \cite{GutierrezOlea}. The statements of those papers needed here are summarized in
the following theorem (see also Proposition 1 in \cite{Carot1993}, and
\cite{ColeyMcManus}, \cite{Krasinski}).
\begin{theorem}[Local result from Theorem 2.1 in \cite{Miguel_RW} and its remark; global result from Theorem 3.1 in \cite{Caballero2011} ]
  \label{Miguel}
 Let $(M,g)$ be a connected Lorentzian ($\signature=1$)
  or Riemannian ($\signature=-1$) manifold
  endowed with a function $\tau\in C^{\infty}(M,\mathbb{R})$
with non-zero gradient everywhere and let $\flow$ be a vector field
 orthogonal
 to the hypersurfaces of constant $\tau$ normalized to be unit, i.e.
 $g(\flow,\flow)=-\signature$. Assume that $\flow$ is 
     
\begin{enumerate}
 \item[(i)] geodesic,  i.e. $\nabla_{\flow} \flow=0$,
\item[(ii)] shear-free, i.e.
  $\pounds_{\flow} g(X,Y)=\UUumbilic g(X,Y)$, for all $X,Y$ orthogonal to
  $\flow$, for some function $\UUumbilic\in C^\infty(M,\mathbb{R})$,
\item[(iii)]   
  with $\grad_g\,\UUumbilic$ being  pointwise parallel to $\flow$.
\end{enumerate}
Suppose that the hypersurfaces defined by the level sets of $\tau$
  are of constant curvature.
  Then $(M,g)$ is locally a RW space.

  Assume further that, for some $t_0 \in \mathbb{R}$ there exists
    an interval $I \subset \mathbb{R}$ such that the map $\Phi: I \times
    \Sigma_{t_0}
    \to M$ defined as the flow of the vector field
    $\flow$ is well-defined and onto. Then
    $(M,g)$ is globally
  a RW space.\footnote{\label{fn1}This global result is stated in \cite{Caballero2011} only in the
      Lorentzian case, but its generalisation to the Riemannian case is immediate.}
\end{theorem}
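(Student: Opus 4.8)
The plan is to show that hypotheses (i)--(iii) already force $(M,g)$ to be, locally, a warped product $-\signature\, dt^2 + a(t)^2\tilde h$ (a generalized RW space), and that the constant-curvature hypothesis on the leaves then promotes this to a genuine RW metric. The first step is to turn $\flow$ into a gradient. Since $\flow$ is unit and orthogonal to the level sets of $\tau$, its metric dual is $\flow^{\flat}=f\,d\tau$ for a nowhere-vanishing function $f$. Because $g(\flow,\flow)=-\signature$ is constant, the acceleration one-form reduces to $(\nabla_{\flow}\flow)^{\flat}=\iota_{\flow}\,d(\flow^{\flat})=(\flow f)\,d\tau-(\flow\tau)\,df$, so the geodesic hypothesis (i) is \emph{equivalent} to $df$ being proportional to $d\tau$, i.e. to $f=f(\tau)$. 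Hence $\flow^{\flat}=dt$ is exact for the time function $t:=\int f\,d\tau$, whose level sets coincide with those of $\tau$; thus $\flow=\grad_g t$ with $g(\grad_g t,\grad_g t)=-\signature$. Completing $t$ by coordinates $x^{a}$ on one leaf and Lie-dragging them along $\flow$ produces synchronous coordinates in which $g=-\signature\, dt^2+h_{ab}(t,x)\,dx^{a}dx^{b}$, the cross terms vanishing because $\flow=\partial_t$ is orthogonal to the leaves.

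Next I would read off the evolution of the leaf metric $h_{ab}$. In these coordinates $\pounds_{\flow}g$ restricted to leaf directions is $\partial_t h_{ab}\,dx^{a}dx^{b}$, so the shear-free hypothesis (ii) is exactly $\partial_t h_{ab}=\UUumbilic\,h_{ab}$, while hypothesis (iii), that $\grad_g\UUumbilic$ be parallel to $\flow$, says precisely $\UUumbilic=\UUumbilic(t)$. Integrating this linear ODE in $t$ leaf-pointwise gives $h_{ab}(t,x)=a(t)^2\,\tilde h_{ab}(x)$, where $a(t)^2=\exp\int_{t_0}^{t}\UUumbilic(s)\,ds$ and $\tilde h:=h(t_0,\cdot)$ is the metric on the initial leaf $\SigmaUU_{t_0}$. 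This already displays $(M,g)$ locally as the warped product $-\signature\, dt^2+a(t)^2\tilde h$. Finally, since $h(t)=a(t)^2\tilde h$ differs from $\tilde h$ by a spatially constant conformal factor, the assumption that each leaf has constant curvature forces $\tilde h$ itself to have constant curvature $\tilde k$; rescaling $\tilde h\mapsto|\tilde k|\tilde h$ when $\tilde k\neq 0$ and absorbing the constant into $a$ normalizes the fibre to curvature $\curvN\in\{-1,0,1\}$, so $g$ takes the RW form of Definition \ref{def:RW} on the chart. As this works around every point, $(M,g)$ is locally RW.

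For the global statement, the extra hypothesis that the flow $\Phi:I\times\SigmaUU_{t_0}\to M$ of $\flow$ be well-defined and onto upgrades the local chart to a global diffeomorphism. The map $\Phi$ is a local diffeomorphism because $\flow$ is nowhere zero and transverse to the leaves; it is injective because the global time function $t$ built above satisfies $\frac{d}{ds}\,t(\Phi(s,p))=-\signature\neq 0$, hence is strictly monotone along each integral curve, so distinct flow parameters yield distinct $t$-values and each integral curve of $\flow$ meets $\SigmaUU_{t_0}$ exactly once. Being onto by assumption, $\Phi$ is then a global diffeomorphism, and it pulls $g$ back to the warped, constant-curvature metric on $I\times\SigmaUU_{t_0}$, giving the global RW conclusion.

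I expect the delicate points to be twofold. The first is the equivalence in step one: that the geodesic condition is exactly what makes $f$ spatially constant, so that $t$ is a bona fide global time coordinate whose leaves match those of $\tau$ (without it one only obtains cross terms and the clean warped structure is lost). The second is the injectivity/covering argument in the global case, where one must rule out distinct integral curves being identified under $\Phi$; it is precisely here that both the onto hypothesis and the monotone time function are needed. The intervening steps---integrating $\partial_t h_{ab}=\UUumbilic(t)\,h_{ab}$ and the curvature rescaling---are routine.
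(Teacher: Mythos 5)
The paper never proves Theorem \ref{Miguel}; it is imported from the literature (local part from Theorem 2.1 of \cite{Miguel_RW}, global part from Theorem 3.1 of \cite{Caballero2011}), so there is no internal proof to compare yours against, and a self-contained argument is in itself a useful addition. Your local argument is the standard warped-product reconstruction and is essentially sound: $\flow^{\flat}=f\,d\tau$, the geodesic condition is equivalent to $df\wedge d\tau=0$, hence $\flow^{\flat}$ is locally exact; synchronous coordinates reduce (ii) and (iii) to $\partial_t h_{ab}=\pm\UUumbilic(t)\,h_{ab}$, which integrates to $h=a(t)^2\tilde h$, and constant curvature of the initial leaf plus a constant rescaling gives the RW form. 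Two small points: with your definition $t=\int f\,d\tau$ one has $dt(\flow)=g(\flow,\flow)=-\signature$, so $\flow=-\signature\,\partial_t$ rather than $\partial_t$ (this only flips a sign in the ODE for $h_{ab}$); and the paper's definition of ``locally a RW space'' demands a \emph{single} model space $(\MRW,\gRW)$ for all points, so strictly speaking you still need to patch your charts, using connectedness of $M$ to see that $\curvN$ and $a$ (up to the admissible normalizations) agree on overlaps.

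The genuine gap is in the global part. You invoke ``the global time function $t$ built above'', but $t$ was constructed only locally: globally you know that $\flow^{\flat}$ is a \emph{closed} one-form, and closed does not imply exact unless $H^1(M)=0$ (or unless the level sets of $\tau$ are connected, so that $f=F\circ\tau$ for a single function $F$); neither is among the hypotheses. The repair is cheap and stays inside your strategy: use $\tau$ itself. Since $\flow^{\flat}(\flow)=-\signature$ and $\flow^{\flat}=f\,d\tau$, the function $f$ is nowhere zero, hence of constant sign on the connected manifold $M$, so $\flow(\tau)=-\signature/f$ has constant sign and $\tau$ is strictly monotone along every integral curve of $\flow$. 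If $\Phi(s,p)=\Phi(s',q)$ with $p,q\in\SigmaUU_{t_0}$, then $p=\Phi(s'-s,q)$ lies on the integral curve through $q$, and $\tau(p)=\tau(q)=t_0$ together with strict monotonicity forces $s=s'$ and then $p=q$. Separately, your final assertion that $\Phi$ ``pulls $g$ back to the warped, constant-curvature metric'' is stated rather than proved; it requires rerunning your local computation along the whole flow: the cross terms $g(\flow,d\Phi(X))$ vanish identically because they vanish at $s=0$ and their $s$-derivative is zero by (i) and unit norm, and then (ii)--(iii) give the warping, once one also notes that $\UUumbilic\circ\Phi$ is annihilated by all $d\Phi(X)$ (as $\grad_g\UUumbilic\parallel\flow$) and is therefore a function of the flow parameter alone on each connected component of $\SigmaUU_{t_0}$. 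With these repairs your proof is complete and is, in substance, the same argument as in the references the paper cites.
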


\begin{remark}
    Note that, because of (i) and (ii), the covector $\bm{\flow} := g (\flow,\cdot)$ satisfies $d \bm{\flow}  =0$,
    and therefore there is a function $\tilde\tau$ so that $\flow=\grad_g\tilde\tau$, locally.
    \end{remark}

  \begin{remark}
    Theorem 3.1 in \cite{Caballero2011} is stated differently. Hypotheses up to (iii) are replaced by the condition that $(M,g)$ admits a gradient
    timelike conformal Killing vector field, namely a vector field $\flow'$ which is the gradient of a function $\tau' \in C^{\infty}(M,\mathbb{R}) $ (in general, different from $\tau$ above) and satisfies
    \begin{align*}
    \signature g (\flow',\flow') <0, \qquad \pounds_{\flow'} g = 2 \rho g,
    \qquad \rho \in C^{\infty}(M,\mathbb{R}).
  \end{align*}
  It is easy to show that the two sets of conditions are equivalent.
    \end{remark}

\section{Umbilic foliations with constant-curvature leaves}
\label{sec:geo_H}

Although our  characterization theorem involves Lorentzian or Riemannian spaces, it is of interest to derive some of the intermediate results in a more general setup. 
As already advanced in the Introduction, this is convenient for future applications to other problems.  In this section we describe this more general setup.

Throughout this section $(\UU, \intervalI, \tau)$
denotes an $(\dimu)$-dimensional manifold, $\intervalI \subset \mathbb{R}$ an open interval and
$\tau$ a smooth function 
$\tau:\UU\to\intervalI$
such that $d\tau$ does not vanish anywhere. For each $t \in \intervalI$ the level surface of $\tau$, $\SigmaUU_t\defi\tau^{-1}(t)$ is a hypersurface of $\UU$
and the set of all $\{ \SigmaUU_t\}$ foliate $\UU$ (see e.g. \cite{ONeill2}).
The submanifolds $\SigmaUU_t$ are  always embedded, but in general they not need to be
diffeomorphic to each other, or connected. However, in sufficiently small open neighbourhoods of a point $p \in \UU$ this is always the case.
We  call $(\UU, \intervalI, \tau)$ a {\bf foliated space}.

The following notation will be used in foliated spaces.
The symbol $t$ always denotes a value in $\intervalI$. For any $f \in C^{\infty}(\UU,\mathbb{R})$, we use $f_t$ for the restriction of $f$ to $\SigmaUU_t$.
More precisely $f_t:=i^\star_t(f)$ where  $i_t:\SigmaUU_t\hookrightarrow\UU$ is the inclusion map.
Any  function $F \in C^{\infty}(\intervalI,\mathbb{R})$ can be transferred to a function on $\UU$ which we denote with the same symbol but in boldface font. Specifically, $\onman{F}\defi F\circ \tau$.
A vector field $\Xtfol \in \X(\UU)$ is said to be tangent to the foliation iff
$\Xtfol(\tau)=0$. For such vector fields, there always exists a  unique vector
field $X_t \in \X(\SigmaUU_t)$ such that $d i_t (X_t ) = X |_{\SigmaUU_t}$.

We shall consider foliated spaces $(\UU, \intervalI,\tau)$ that carry an additional geometric structure capable of inducing 
a positive definite Riemannian metric on each $\SigmaUU_t$. We therefore
assume that $\UU$ is endowed with  a symmetric
two-covariant tensor field $\GammaUU$ such that
$\GammaUU{}_t:=i_t^\star(\GammaUU)$ is a positive definite metric.
For this paper, it would suffice to assume that $\GammaUU$ is either a Riemannian metric, or a Lorentzian metric for which $\SigmaUU_t$ are spacelike hypersurfaces, but we keep this more general setup for future purposes.
We call the collection $(\UU,\intervalI,\tau,\GammaUU)$ a {\bf metric foliated space}.

In a metric foliated space there is a unique vector field $V$ defined by the properties (i) $V(\tau)=1$ and (ii) $\GammaUU(V,\Xtfol)=0$ for all vector fields
$\Xtfol \in \X(\UU)$ tangent to the foliation. Indeed, at any point $p \in \UU$ the tensor
$\GammaUU|_p$ is positive definite when restricted to the $n$-dimensional
subspace of vectors tangent to $\SigmaUU_t$. Thus, the signature of $\GammaUU|_p$ is necessarily $\{ \epsilon_p ,+, \cdots, +\}$ where
$\epsilon_p \in \{ -1,0,1\}$ (note that  $\epsilon_p$ may change from point to point). By basic linear  algebra, the space
\begin{equation*}
(T_p \SigmaUU_t)^{\perp} :=
  \{ X \in T_p \UU;  \quad \GammaUU|_p(X, \Ytfol)= 0 \quad \forall \Ytfol \in T_p \SigmaUU_t\}
\end{equation*}
is one-dimensional. All non-zero vectors  $Z \in (T_p \SigmaUU_t)^{\perp}$
are transverse to
$\SigmaUU_t$, i.e. satisfy $Z(\tau) \neq 0$ (otherwise we would have a non-zero vector tangent to $\SigmaUU_t$ and orthogonal to all other tangent vectors, which is impossible given that 
$\GammaUU_{t}$ is positive definite). It is therefore clear that there is exactly one vector $V|_p \in (T_p \SigmaUU_t)^{\perp}$ satisfying $V|_p(\tau)=1$. It is immediate to check that $V$ depends smoothly on $p$ and hence defines a vector field.

We shall call  $V$ the {\it normal vector field of the foliation}.
Observe that $V$ depends on the choice of $\tau$ and not just on the leaves
$\{ \SigmaUU_t\}$ of the foliation. Another function $\tau' : \UU \rightarrow
\intervalI'$ will define exactly the same leaves as $\tau$ if and only if 
$\tau' =  F \circ \tau$ where $F: I \rightarrow I'$ is a diffeomorphism mapping the interval $I$ to the interval $I'$.
The normal field
$V'$ of $\tau'$ is related to the normal field $V$ of $\tau$ by
$V' =  ( \onman{\frac{d F}{dt}})^{-1}  V$. This property will be used below.
To avoid any misunderstanding, note that we are making no assumption on the norm of $V$. Even when $\GammaUU$ is a metric, $V$ will in general {\it not} be
a unit normal to the hypersurfaces.

The two global assumptions we shall make in order
to characterize a RW space are encoded in the following definition:
\begin{definition}
  \label{Def:Umbilic}
  A metric foliated space $(\UU,\tau,\intervalI,\GammaUU)$
  is called an {\bf umbilic foliation with constant curvature leaves} iff:
  \begin{enumerate}
\item There exist a smooth function $\UUumbilic:
  \UU \rightarrow \mathbb{R}$ such that 
  \begin{equation}
  \label{eq:umbilic_fol}  
  (\pounds_{V}\GammaUU)(\Xtfol,\Ytfol)=\UUumbilic\,\GammaUU(\Xtfol,\Ytfol)
\end{equation}
for all vectors $\Xtfol$, $\Ytfol$ tangent to the foliation.
\item The induced metric $\GammaUU_t$ on each $\SigmaUU_t$ is a metric of constant curvature $\curv(t)$.
\end{enumerate}
\end{definition}
It will be convenient to distinguish when conditions 1 or 2  are being used. When only 1 holds we speak of an {\bf umbilic foliation} and
when only 2 holds of a {\bf foliation with constant curvature leaves}.

As described in the Introduction, the local assumption that we make to characterize a RW geometry involves a single timelike curve. In  the more general setup of this section we consider a smooth curve $\oo$ transverse to the foliation, i.e.  such that its tangent vector $\mathfrak{t}_\oo$ satisfies $\mathfrak{t}_{\oo} (\tau) \neq 0$
everywhere. Letting $\intervalIo \subset \intervalI$ be the image of
  $\tau \circ \oo$ (i.e. the set of values that $\tau$ takes along the curve), we may parametrize $\oo$ by $t \in \intervalIo$.
In other words, we consider the curve as described by a map
$\oo : \intervalIo\subset \intervalI \to\UU$ such that 
$\tau \circ \oo$ is the identity map of $\intervalIo$, or equivalently
$\oo(t) \in \SigmaUU_t$ for all $t \in \intervalIo$. By definition we say that the curve $\oo$ is {\it parametrized by $t$} whenever this happens.

The curve
$\oo$ describes the path followed by our privileged observer. We also need this observer to be endowed with a frame. To that aim, we select $\dim$ vector fields $X_A(t)$, $A,B = 1,\cdots, \dim$,
along the curve (i.e. such that $X_A (t) \in T_{\oo(t)} \UU$) smoothly depending\footnote{To make this smooth dependence precise, simply note that
  the curve $\oo$ carries an associated vector bundle where the fiber at each point
  $\oo(t)$ is the tangent space $T_{\oo(t)} \UU$. The map $X_A (t)$ is constructed
  simply by taking a section of this vector bundle and coordinating the base space with $t$.}
on $t$ with the properties that (i) $X_A(t)$ is 
tangent to
$\SigmaUU_t$ at $\oo(t)$ and (ii) $\{ X_A(t)\}$ is an orthonormal basis of $T_{\oo(t)} \SigmaUU_t$, namely $\GammaUU_t (X_A, X_B) = \delta_{AB}$.

From the curve $\oo$ and the orthonormal frame $\{ X_A(t)\}$ we can construct
a canonical coordinate system in a neighbourhood of $\oo$ as follows. For
each $t \in \intervalIo$ 
we have a point $\oo(t) \in \SigmaUU_t$ and
an orthonormal frame $\{ X_A(t)\}$ of $T_{\oo(t)} \SigmaUU_t$. The space
$(\SigmaUU_t,\GammaUU_t)$
is a space of constant curvature $\curv(t)$. In appendix \ref{sec:constant_curv_spaces} we give a self-contained and  direct
description of how to construct stereographic coordinates of
$(\SigmaUU_t,\GammaUU_t)$ centered at $\oo(t)$ and with frame
$\{ X_A(t)\}$. The procedure involves a so-called {\it Hessian basis}
(see Definition \ref{def:hess_basis}), which is a set of $\dim+2$
real valued functions $\Yy_t^{\ta}=\{\Yy_t^0,\Yy_t^A,\Yy_t^{\dimu}\}$ that solve the  PDE problem
\begin{align}
  \Hess_t\Yy_t^{\ta}= \left( -\curv(t) \Yy_t^{\ta} + \delta^{\ta}_{\dim+1}\right) \GammaUU_t \label{eq:Hesst} \\
  \Yy_t^{\ta} |_{\oo(t)} = \delta^{\ta}_0, \qquad 
  X_A (\Yy_t^{\ta}) |_{\oo(t)} = \delta^{\ta}_A,  \label{boundaryt} 
\end{align}
where  $\Hess_t$ is the Hessian with respect to $\GammaUU_t$.
Given that
$(\SigmaUU_t, \GammaUU_t)$ is of constant curvature $\curv(t)$,
Remark \ref{constant} in Appendix \ref{sec:constant_curv_spaces}
implies the existence of an open neighbourhood $ \SigmaUU'_t \subset \SigmaUU_t$ of $\oo(t)$ where the PDE above admits a unique solution.
Define $\UU' := \cup_{t \in \intervalIo} \SigmaUU'_t$. This is an
open neighbourhood of the curve $\oo$. On this set we can define
scalar functions $\Yy^{\ta}=\{\Yy^0,\Yy^A,\Yy^{\dimu}\}$ simply by
\begin{equation*}
  \Yy^{\ta} (p) = \Yy^{\ta}_{\tau(p)} (p), \qquad \forall p \in \UU',
\end{equation*}
i.e. we simply stack the functions $\{ \Yy_t^{\ta}\}$ together. Since the
boundary conditions are independent of $t$ (in particular smooth in $t$)
and the curve $\oo$ is also smooth, it follows that the functions
$\{ \Yy^{\ta}\}$ are smooth on $\UU'$.

For each value $t$ $\in \intervalIo$,  we now apply Lemma \ref{properties}
(together with  Remark \ref{constant}) to
$(S,\gS)=(\SigmaUU'_t,\GammaUU_t)$, with
curvature $\curvS=\curv(t)$,  point $o= \oo(t)$, basis $\{X_A = X_A(t)\}$ and functions $\YyS^{\ta} = \Yy_t^{\ta}$.
As a consequence,  the functions $\{\lambda^A\}$ defined  by
\begin{equation*}
  \lambda^A := \frac{\Yy^A}{1- \frac{1}{2} \onman{\curv} \Yy^{\dim+1}} 
\end{equation*}
are well-defined on some open domain $\domainl \subset \UU'$ containing $\oo$. We restrict ourselves to $\domainl$ from now on
  (and still use
$\SigmaUU_t$ to denote $\SigmaUU_t \cap \domainl$).

In terms of $\lambda^A$ the functions $\{\Yy^{\ta}\}$ take the form, c.f. \eqref{Yinx},
\begin{equation}
  \Yy^0=\frac{4-\onman{\curv}\lsq}{4+\onman{\curv} \lsq},\quad
  \Yy^A=\left(1+\frac{\onman{\curv}}{4}\lsq\right)^{-1}\lambda^A,\quad
  \Yy^{\dimu}=\left(1+\frac{\onman{\curv}}{4}\lsq\right)^{-1}\frac{1}{2}\lsq,
  \label{eq:Ys}
\end{equation}
where $\lsq:=\delta_{AB}\lambda^A\lambda^B$ since in the present case
$h_{AB}= \delta_{AB}$ given that the basis $\{X_A\}$ has been taken to be orthonormal. The restriction $\lambda^A_t$
of  $\lambda^A$ to each $\SigmaUUo_t$
are stereographic coordinates
of $\SigmaUU_t$ with center $\oo(t)$ and
frame $\{ X_A(t)\}$, so the metric takes the form (cf. \eqref{confflat})
\begin{equation}
  \GammaUU_t=\left(1+\frac{\curv (t)}{4}\lsq_t\right)^{-2}\delta_{AB}d\lambda_t^A d\lambda_t^B.
  \label{eq:gammat}
\end{equation}

The functions $\{ \Yy^{\ta}\}$ are smooth on $\domainl$, so the same holds for
$\{ \lambda^A\}$. Moreover, the set $\{\tau,\lambda^A\}$ defines a coordinate chart on $\domainl$. 
Since we know the form of the restriction of $\GammaUU$ to the
hypersurface $\{\tau = t\} \cap \domainl$, namely \eqref{eq:gammat}, the full tensor
$\GammaUU$ in coordinates $\{ \tau, \lambda^A\}$ must take the form
\begin{equation}
  \GammaUU= \GamVV d\tau^2+\conffactor^{-2}
  \delta_{AB}(d\lambda^A + f^A d\tau)(d \lambda^B+ f^Bd\tau)
  \label{eq:gamma_pre_pre}
\end{equation}
for some functions $\GamVV,f^A\in C^{\infty}(\domainl,\mathbb{R})$. We can also express the normal vector field of the foliation $V$ in this coordinate chart. The two conditions $V(\tau)=1$ and $\GammaUU(V,\partial_A) =0$ are equivalent to
\begin{equation*}
  V = \partial_{\tau} - f^A \partial_A.
\end{equation*}
Hence $f^A = - V (\lambda^A)$ and we can rewrite  \eqref{eq:gamma_pre_pre} as
\begin{equation}
  \GammaUU=\GamVV d\tau^2+\conffactor^{-2}
  \delta_{AB}(d\lambda^A -V(\lambda^A)d\tau)(d \lambda^B-V(\lambda^B)d\tau).
  \label{eq:gamma_pre}
\end{equation}
This expression gives us a handle on how to determine the form of the tensor
$\GammaUU$ in the context of Definition \ref{Def:Umbilic}.

So far we have only assumed that the foliation has constant curvature leaves.
We now impose the condition that the foliation is umbilic.
The stereographic coordinates have allowed us to construct
a vector field $\partial_{\tau}$ on $\domainl$. This vector field is canonical in the following sense. In geometric terms $\partial_{\tau}$ is the field of tangents to the curves
of constant $\lambda^A$ parametrized by $\tau$. Since the stereographic coordinates are canonically constructed from the curve $\oo$ and the basis $\{ X_A(t)\}$, the vector field depends {\it only} on $\oo$ and
$\{ X_A(t)\}$. It can therefore be denoted  by $\W$. We prefer this name over
$\partial_{\tau}$, as this emphasizes its geometric meaning. Obviously, in the
coordinates $\{ \tau, \lambda^A\}$ it holds $\W = \partial_{\tau}$.
The following proposition is key in the determination of $\GammaUU$.

\begin{proposition}
  \label{Vtangent}
  Let $(\UU,\intervalI,\tau,\GammaUU)$ be an umbilic foliation with constant curvature leaves as defined in Definition
  \ref{Def:Umbilic}. Let $\oo$ be a curve parametrized by $t$ and  $\{ X_A(t)\}$ an orthonormal basis along $\oo$. Consider the domain $\domainl$ as constructed above.
  
  Then the vector field $V - \W$ defined on $\domainl$
  is tangent to the foliation and
  the restriction to each leaf $\SigmaUUo_t$
  is a conformal Killing vector of $\GammaUU_t$
  with conformal factor $ \UUumbilic_t + \Yy_t^{\dimu}{\dot\curv(t)} $
    (c.f. \eqref{eq:conf_equation}).
\end{proposition}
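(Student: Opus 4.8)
The first claim is immediate and I would dispose of it at once. By definition the normal field satisfies $V(\tau)=1$, while in the canonical chart $\{\tau,\lambda^A\}$ the field $\W$ equals $\partial_\tau$, so that $\W(\tau)=1$ as well. Hence $(V-\W)(\tau)=0$, which is exactly the condition for $V-\W$ to be tangent to the foliation. The substantive content is the conformal Killing statement, and the plan is to exploit linearity of the Lie derivative by writing $\pounds_{V-\W}\GammaUU=\pounds_{V}\GammaUU-\pounds_{\W}\GammaUU$ and evaluating each piece on vectors tangent to the foliation. The first term is controlled directly by the umbilic hypothesis \eqref{eq:umbilic_fol}, which gives $(\pounds_{V}\GammaUU)(\Xtfol,\Ytfol)=\UUumbilic\,\GammaUU(\Xtfol,\Ytfol)$; so the whole problem reduces to computing $\pounds_{\W}\GammaUU$ on the leaves.

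Here the virtue of the canonical coordinates is decisive: since $\W=\partial_\tau$ is a coordinate vector field, its Lie derivative acts on tensor components by plain partial differentiation. Thus for $\Xtfol=X^A\partial_A$ and $\Ytfol=Y^B\partial_B$ tangent to the foliation I would write $(\pounds_{\W}\GammaUU)(\Xtfol,\Ytfol)=X^AY^B\,\partial_\tau\GammaUU_{AB}$, where $\GammaUU_{AB}$ is the purely spatial block of \eqref{eq:gamma_pre}, namely $\GammaUU_{AB}=\conffactor^{-2}\delta_{AB}$. The only $\tau$-dependence here enters through $\onman{\curv}=\curv\circ\tau$, because the $\lambda^A$ are coordinates and hence $\partial_\tau$-constant. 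A short computation then gives $\partial_\tau\GammaUU_{AB}=-\dot\curv\,\big(\tfrac12\conffactor^{-1}\lsq\big)\GammaUU_{AB}$, and the parenthesised factor is precisely $\Yy^{\dimu}$ by the third equation in \eqref{eq:Ys}. Therefore $(\pounds_{\W}\GammaUU)(\Xtfol,\Ytfol)=-\dot\curv\,\Yy^{\dimu}\,\GammaUU_t(\Xtfol,\Ytfol)$.

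Combining the two contributions yields $(\pounds_{V-\W}\GammaUU)(\Xtfol,\Ytfol)=(\UUumbilic+\dot\curv\,\Yy^{\dimu})\,\GammaUU_t(\Xtfol,\Ytfol)$ on all foliation-tangent $\Xtfol,\Ytfol$. To finish I would invoke the standard fact that, when the deforming field $V-\W$ and the test fields are all tangent to the foliation, the restriction of $\pounds_{V-\W}\GammaUU$ to $\SigmaUUo_t$ coincides with $\pounds_{(V-\W)_t}\GammaUU_t$; this uses that Lie brackets of foliation-tangent fields are again tangent, together with the fact that $\GammaUU$ restricts to $\GammaUU_t$. Restricting to $\SigmaUUo_t$ and reading off the conformal factor $\UUumbilic_t+\Yy_t^{\dimu}\dot\curv(t)$ then gives the claim.

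The genuinely delicate point is not the algebra but the legitimacy of the component computation: one must be confident that $\W$ really is the coordinate field $\partial_\tau$ on all of $\domainl$, so that $\pounds_{\W}$ reduces to $\partial_\tau$ of components, and that the spatial block of $\GammaUU$ in \eqref{eq:gamma_pre} depends on $\tau$ only through $\curv(t)$. Both rest entirely on the canonical construction of the stereographic coordinates from $\oo$ and $\{X_A(t)\}$. The remaining subtlety is the passage from the ambient Lie derivative to the intrinsic one on each leaf, and it is precisely here that the tangency of $V-\W$, established at the outset, is used in an essential way.
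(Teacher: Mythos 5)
Your proposal is correct and follows essentially the same route as the paper: tangency from $(V-\W)(\tau)=0$, computation of $\pounds_{\W}\GammaUU$ on the spatial block of \eqref{eq:gamma_pre} using $\W=\partial_\tau$ in the canonical chart (yielding the factor $-\dot\curv\,\Yy^{\dimu}$), the umbilicity hypothesis for $\pounds_{V}\GammaUU$, and finally the identity $i_t^{\star}(\pounds_Z T)=\pounds_{Z_t}(i_t^{\star}T)$ for foliation-tangent $Z$ to pass to the intrinsic Lie derivative on each leaf. The only cosmetic difference is that you work with components $\partial_\tau\GammaUU_{AB}$ while the paper evaluates on the coordinate fields $\partial_A,\partial_B$ using $[\partial_\tau,\partial_{\lambda^A}]=0$; these are the same computation.
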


\begin{proof}
  The fact that $V - \W$ is tangent to the foliation is immediate from $(V- \W)(\tau)=0$.  Now we compute
  $\pounds_{\W} \GammaUU$ acting on tangent vectors. In the coordinates $\{ \tau, \lambda^A\}$ we have that $\W =
  \partial_{\tau}$ and that $\partial_A$ is a basis of tangent vectors. So, it suffices to compute
  \begin{equation*}
    (\pounds_{\partial_\tau}\GammaUU)(\partial_A,\partial_B)
    =\partial_\tau(\GammaUU(\partial_A,\partial_B))
    =-\frac{2\lsq}{4+\onman{\curv}\lsq}\onman{\dot\curv}
    \GammaUU(\partial_A,\partial_B),
  \end{equation*}
  where $[\partial_{\tau},
  \partial_{\lambda^A}]=0$ was used in the first equality and
  \eqref{eq:gamma_pre} in the second. In more geometric terms, this equality can be written as
  \begin{equation*}
    i^\star_t(\pounds_{\W}\GammaUU)
    =-\frac{2 \lsq_t}{4+\curv(t)\lsq_t}{\dot\curv(t)}
    \GammaUU_t=-\Yy_t^{\dimu}{\dot\curv(t)}\GammaUU_t.
  \end{equation*}
  Concerning $V$, the umbilicity condition \eqref{eq:umbilic_fol} can be written
  as $i_t^{\star} (\pounds_V \GammaUU) = \UUumbilic_t
  \GammaUU_t$.      Hence
  \begin{equation*}
    i_t^{\star}
    \left ( \pounds_{V - \W} \GammaUU \right )
    = \left ( \UUumbilic_t + \Yy_t^{\dimu}{\dot\curv(t)} \right ) \GammaUU_t.
  \end{equation*}
  Now, for any  vector field $Z$ tangent to the foliation, i.e. of the form $Z = di_t (Z_t)$, $Z_t \in \X(\SigmaUU_t)$, and
  any covariant tensor field $T$ on $\UU$ the following
  general identity holds 
  \begin{equation*}
    i_t^{\star} \left ( \pounds_Z T  \right ) =
    \pounds_{Z_t} \left ( i_t^{\star}(T) \right ).
  \end{equation*}
  Applying this to $Z= V -\W$ and $T= \GammaUU$,
  and defining $V^\ptos_t \in \X (\SigmaUU'_t)$
  by $di_t (V^{\ptos}_t ) = V - \W$ we get
  \begin{equation}
    \pounds_{V^{\ptos}_t} \GammaUU_t =
    \left ( \UUumbilic_t + \Yy_t^{\dimu}{\dot\curv(t)} \right ) \GammaUU_t,
    \label{eq:conf_equation}
  \end{equation}
  which proves that $V^{\ptos}_t$ is a conformal Killing vector of $\GammaUU_t$
  with conformal factor $ \UUumbilic_t + \Yy_t^{\dimu}{\dot\curv(t)}$, as claimed.
\end{proof} 

In Appendix \ref{sec:constant_curv_spaces} we write down a basis of the  conformal Killing algebra of any  space of constant curvature  in terms of a Hessian basis (related results can be found in \cite{MarsPeonNieto}) and we find the explicit form of the basis vectors in stereographic coordinates centered at a point $o$ with basis $\{X_A\}$. This requires that the dimension of the space is at least three. So we assume $\dim \geq 3$ from now on.

Translating the results in
\ref{res:CKVs} into the present setting we get that the vector fields in $\SigmaUU'_t$ defined by
\begin{equation*}
  \eta_t^{\ta\tb} := \Yy_t^{\ta} \grad_{\GammaUU_t} \Yy_t^{\tb}
  - \Yy_t^{\tb} \grad_{\GammaUU_t} \Yy_t^{\ta}
\end{equation*}
are such  that  $\{\eta_t^{\ta\tb}, \ta < \tb\}$ is a basis of the conformal
Killing algebra of $(\SigmaUUo_t, \GammaUU_t)$.
Their explicit form in the coordinates   $\{\lambda^A_t\}$ is (cf. \eqref{ckvl_S}-\eqref{killsl_S})
\begin{align*}
  &\eta_t^{0\, \dimu}= - \eta_t^{\dimu\, 0} = \lambda_t^A\partial_A,\qquad
    \eta_t^{A\, \dimu} = - \eta_t^{\dimu\, A} 
    =\left(\lambda_t^A\lambda_t^B-\frac{1}{2}\lsq_t \delta^{AB} \right)\partial_B, \\
  &\eta_t^{AB}=(\lambda_t^A\delta^{BC}-\lambda_t^B\delta^{AC})\partial_C,\quad
    \eta_t^{0A}= - \eta_t^{A0} =
    \delta^{AB}\partial_B+\frac{\curv(t)}{2} \eta_t^{A\, \dimu}. 
\end{align*}
Moreover, these fields satisfy
\begin{equation}\label{eq:zeta_eqs}
 \pounds_{\eta_t^{0\,\dimu}}\GammaUU_t =2\Yy_t^{0}\GammaUU_t,\qquad
 \pounds_{\eta_t^{A\,\dimu}}\GammaUU_t=2\Yy_t^{A}\GammaUU_t,\qquad
 \pounds_{\eta_t^{0A}} \GammaUU_t =
 \pounds_{\eta_t^{AB}} \GammaUU_t = 0,
\end{equation}
so in particular $\{\eta_t^{0A}, \eta_t^{AB}\}$ with  $A <B$ is a basis of the Killing algebra of $(\SigmaUUo_t, \GammaUU_t)$. Let us define the vector fields $\eta^{\ta\tb}$
on $\domainl$ by $di_t ( \eta_t^{\ta\tb})$. In the coordinates
$\{ \tau, \lambda^A\}$ we clearly have
\begin{align}
  &\eta^{0\, \dimu}= - \eta^{\dimu\, 0} = \lambda^A\partial_A,\qquad
    \eta^{A\,\dimu} = - \eta^{\dimu\, A} 
    =\left(\lambda^A\lambda^B-\frac{1}{2}\lsq \delta^{AB} \right)\partial_B,\label{ckvl}\\
  &\eta^{AB}=(\lambda^A\delta^{BC}-\lambda^B\delta^{AC})\partial_C,\quad
    \eta^{0A}= - \eta^{A0} =
    \delta^{AB}\partial_B+\frac{\onman{\curv}}{2} \eta^{A\,\dimu}.\label{killslt}
\end{align}
Proposition \ref{Vtangent} implies the existence of $(n+1)(n+2)/2$
functions $C_0, C_A, \FF_{0A}, \FF_{AB} = - \FF_{BA} \in  C^{\infty} (\intervalIo,\mathbb{R})$   such that
\begin{align}
  V_t^\ptos& =
             C_0(t)\eta^{0\,\dimu}_t + C_A(t) \eta^{A\,\dimu}_t
             +\FF_{0A}(t)\eta_t^{0A}+ \frac{1}{2} \FF_{AB}(t)\eta_t^{AB}
             \qquad \Longleftrightarrow \nonumber \\
  V - \W &=
           \onman{C}_0 \eta^{0\,\dimu} + \onman{C}_A \eta^{A\,\dimu}
           + \onman{\FF}_{0A}\eta^{0A}+ \frac{1}{2} \onman{\FF}_{AB}\eta^{AB}.
           \label{eq:V=}
\end{align}
The conformal Killing vector $V^{\ptos}_t$
must satisfy (\ref{eq:conf_equation}), which after using 
(\ref{eq:zeta_eqs}) on the left hand side,
yields
\begin{equation} 
  \label{eq:main_rel}
  2C_0(t)\Yy_t^0+2C_A(t)\Yy_t^A=\UUumbilic_t+\dot\curv(t)\Yy^{\dimu}_t .
\end{equation}
This restriction links the umbilicity function $\UUumbilic$ and the rate of change of the
space curvature to the functions $\{  C_0, C_A\}$.
The other functions $\{\FF_{0A}, \FF_{AB}\}$ in \eqref{eq:V=} are
associated with the isometries of each leaf, and they remain completely free.

The geometrical meaning of the functions $\{ C_0, C_A\}$
can be determined by evaluating  \eqref{eq:main_rel} as well as
its derivatives along $X_A(t)$,
along the curve  $\oo$. Since in the coordinates $\{ \tau, \lambda^A\}$ this curve is simply $\{\tau =t, \lambda^A=0\}$ and the orthonormal basis is $X_A (t) = \partial_A |_{\oo(t)}$, a
simple computation yields
\begin{equation}
  \label{eq:at_origin}
\UUumbilic|_{\oo(t)}=2C_0(t),\quad X_A(t) (\UUumbilic)|_{\oo(t)}=2C_A(t).
\end{equation}
Furthermore, if we compute the Hessian of \eqref{eq:main_rel}
and use \eqref{eq:Hesst} the following equation
involving only
$\UUumbilic$ and $\curv$ follows
\begin{equation}
  \label{eq:hess_umbilic}
  \Hess_t\UUumbilic_t=-(\curv(t)\UUumbilic_t+\dot\curv(t))\GammaUU_t.
\end{equation}
We are ready to prove a classification result for
umbilic foliations
with constant curvature leaves.
\begin{proposition}[Umbilic foliation]
  \label{Umbilicfoliation}
  Let $(\UU,\intervalI,\tau, \GammaUU)$ be
  an umbilic foliation  with constant curvature leaves (cf.
  Definition \ref{Def:Umbilic})  with umbilicity
  function $\UUumbilic$ and curvature $\curv(t)$.
Assume that $\UU$ has dimension $\dim+1$ with $\dim \geq 3$ and
  let $\oo: \intervalIo\subset\intervalI \rightarrow \UU$
  be any curve parametrized by $t$ and $\{X_A(t)\}$ an orthonormal
  basis of $T_{\oo(t)} \UU$ smoothly depending on $t$.
  
  Then there exists an open neighbourhood $\domainl$ of $\oo$,
  unique coordinates $\{ \tau,\lambda^A\}$ on $\domainl$, and unique
  functions $C_0, C_A,
  \FF_{0A}, \FF_{AB} = - \FF_{BA} \in C^{\infty}(\intervalIo, \mathbb{R})$ such that
  $\oo(t) = \{ \tau=t, \lambda^A =0\}$, $X_A (t) = \partial_A|_{\oo(t)}$ and
  \begin{align}
    \UUumbilic &=
                 2\onman{C}_0 
                 + \left ( 1+\frac{\onman{\curv}}{4}\lsq \right )^{-1} 
                 \left ( 2\onman{C}_A\lambda^A -
                 \left (\onman{C}_0\onman{\curv}+ \frac{1}{2} \onman{\dot\curv}
                 \right  ) \lsq \right ),
                 \label{res:expansion} \\
    \GammaUU &=\GamVV d\tau^2+\conffactor^{-2}
               \delta_{AB} \bm\theta^A\bm\theta^B,\label{eq:gamma_pre_final}\\
    \bm\theta^A &:=d\lambda^A
                  -\left\{\onman{C}_0\lambda^A+\left(\onman{C}_B
                  +\frac{\onman{\curv}}{2}\onman{\FF}_{0B}\right)
                  \left(\lambda^B\lambda^A-\frac{1}{2}\lsq\delta^{AB}\right)+\onman{\FF}_{0B}\delta^{BA}
                  +\onman{\FF}_{BC}\lambda^B\delta^{CA}\right\}d\tau.\nonumber
  \end{align}
  for some function $\GamVV \in C^{\infty}(\domainl, \mathbb{R})$.
\end{proposition}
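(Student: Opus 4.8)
The plan is to package the results already obtained in this section into the stated normal form. The heavy lifting — the canonical stereographic chart built from the Hessian basis, Proposition \ref{Vtangent}, the decomposition \eqref{eq:V=}, and the scalar identity \eqref{eq:main_rel} — is in place, so what remains is identification of the free functions and two direct substitutions.

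First I would settle the geometric data and its uniqueness. The open set $\domainl$ and the coordinates $\{\tau,\lambda^A\}$ are precisely those constructed above, and by construction they are canonically determined by the pair $(\oo,\{X_A(t)\})$, giving $\oo(t)=\{\tau=t,\lambda^A=0\}$ and $X_A(t)=\partial_A|_{\oo(t)}$; this also forces uniqueness of the chart. By Proposition \ref{Vtangent} the leaf field $V^{\ptos}_t$ is a conformal Killing vector of $\GammaUU_t$, and since $\dim\geq 3$ the fields $\{\eta^{\ta\tb}_t\}$ form a basis of the conformal Killing algebra; hence the expansion \eqref{eq:V=} exists and its coefficients $C_0,C_A,\FF_{0A},\FF_{AB}$ are uniquely determined for each $t$, with smooth $t$-dependence inherited from that of $V$, $\W$ and the $\eta$'s.

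Next I would derive \eqref{res:expansion}. Rewriting \eqref{eq:main_rel} on $\domainl$ gives $\UUumbilic = 2\onman{C}_0\Yy^0 + 2\onman{C}_A\Yy^A - \onman{\dot\curv}\,\Yy^{\dimu}$. Substituting the explicit stereographic forms \eqref{eq:Ys} and using $\Yy^0 = 1 - \tfrac{1}{2}\onman{\curv}\conffactor^{-1}\lsq$ collapses the right-hand side, after factoring $\conffactor^{-1}$ out of the $\Yy^A$ and $\Yy^{\dimu}$ terms, to exactly \eqref{res:expansion}. Evaluating at $\lambda^A=0$ and differentiating along $X_A$ then recovers \eqref{eq:at_origin}, which pins down $C_0$ and $C_A$ as $\tfrac12\UUumbilic$ and $\tfrac12 X_A(\UUumbilic)$ at $\oo(t)$.

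Finally I would produce the metric \eqref{eq:gamma_pre_final}. In the chart $\W=\partial_\tau$, so $V(\lambda^A)=(V-\W)(\lambda^A)$ because $\partial_\tau\lambda^A=0$; feeding \eqref{eq:V=} through the coordinate forms \eqref{ckvl}--\eqref{killslt} yields $V(\lambda^A)$ as a polynomial in $\lambda$. The one point needing care is the antisymmetric block: the $\FF_{AB}=-\FF_{BA}$ symmetry combines the two terms of $\eta^{AB}$ so that $\tfrac12\onman{\FF}_{AB}\eta^{AB}$ contributes exactly $\onman{\FF}_{BC}\lambda^B\delta^{CA}$ to the $\partial_A$-component, while the $\onman{\curv}$-term of $\eta^{0A}$ merges with $\eta^{A\,\dimu}$ into the combination $(\onman{C}_B+\tfrac{\onman{\curv}}{2}\onman{\FF}_{0B})$. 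Inserting the resulting $V(\lambda^A)$ into \eqref{eq:gamma_pre} via $\bm\theta^A = d\lambda^A - V(\lambda^A)d\tau$ produces the stated $\bm\theta^A$, with $\GamVV$ the still-unconstrained $d\tau^2$ coefficient. The main obstacle is not any single conceptual step but keeping the index bookkeeping in the $V(\lambda^A)$ computation consistent through the antisymmetrization; everything else is direct substitution of already-established formulas.
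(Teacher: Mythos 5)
Your proposal is correct and follows essentially the same route as the paper's proof: rewrite \eqref{eq:main_rel} as $\UUumbilic=2\onman{C}_0\Yy^0+2\onman{C}_A\Yy^A-\onman{\dot\curv}\Yy^{\dimu}$ and insert \eqref{eq:Ys} to get \eqref{res:expansion}, then express $V$ via \eqref{eq:V=} and \eqref{ckvl}--\eqref{killslt} and substitute into \eqref{eq:gamma_pre} to obtain \eqref{eq:gamma_pre_final}, with uniqueness coming from the canonical stereographic chart. Your added remarks on the uniqueness of the coefficients (from the basis property of the conformal Killing fields when $\dim\geq 3$) and on the index bookkeeping in the antisymmetric block only make explicit details the paper leaves implicit.
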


\begin{proof}
  Another way of writing equation \eqref{eq:main_rel}  is
  \begin{equation*}
    \UUumbilic=2\onman{C}_0\Yy^0 + 2 \onman{C}_A \Yy^A
    -\onman{\dot\curv}\Yy^{\dimu},
  \end{equation*}
  which becomes
  \eqref{res:expansion} after we insert \eqref{eq:Ys}. To show
  \eqref{eq:gamma_pre_final}
  we first note that \eqref{eq:V=} gives
  \begin{equation}
    V  =\partial_\tau+\onman{C}_0 \lambda^B\partial_B+
          \left(\onman{C}_A+\frac{\onman{\curv}}{2}\onman{\FF}_{0A}\right)
          \left(\lambda^A\lambda^B-\frac{1}{2}\lsq \delta^{AB} \right)\partial_B+\onman{\FF}_{0A}\delta^{AB}\partial_B+\onman{\FF}_{AC}\lambda^A\delta^{CB}\partial_B,
          \label{Expr:V}
  \end{equation}
  after recalling that $\W = \partial_{\tau}$ in the coordinates $\{ \partial_{\tau}, \partial_A\}$ and inserting the explicit expressions
  \eqref{ckvl}-\eqref{killslt}. Using this in (\ref{eq:gamma_pre}) brings
  $\GammaUU$ into the form \eqref{eq:gamma_pre_final}. Since the stereographic coordinates $\{\lambda^A\}$ are unique once the curve $\oo$ and the orthonormal
  frame $\{ X_A\}$ along $\oo$ has been chosen, the various uniqueness claims
  hold.
\end{proof}

\begin{remark}
    \label{rem:extension}
The neighbourhood $\domainl$ extends as far as the stereographic coordinates $\lambda^A$ around the curve $\oo$ can be  defined. This property will be used below.
\end{remark}
So far we have found necessary consequences of Definition \ref{Def:Umbilic}. To make sure that no other information can be extracted we need a reciprocal result. 
\begin{proposition}\label{res:ball}
  Let $(\domainl, \intervalIo,\tau, \GammaUU)$
  be defined by $\domainl := \intervalIo \times B_0 (r)$, where $B_0(r)$
  is a ball in $\mathbb{R}^{\dim}$ centered at $0$ with sufficiently small radius $r$, $\tau \in C^{\infty}(
  \domainl, \mathbb{R})$ is the projection of $\intervalIo \times B_0(r)$
  onto the first factor, and $h$ is
  given  by \eqref{eq:gamma_pre_final} where   $\{\lambda^A\}$ are
  Cartesian coordinates on $B_0(r)$. Then
  $(\domainl,\intervalIo,\tau,  \GammaUU)$
is an umbilic foliation with constant curvature leaves with umbilicity function
  given by \eqref{res:expansion} and curvature $\curv(t) := \onman{\curv}_t$.
\end{proposition}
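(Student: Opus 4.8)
The plan is to run the computations behind Propositions~\ref{Vtangent} and \ref{Umbilicfoliation} in reverse: equations \eqref{eq:gamma_pre_final} and \eqref{res:expansion} were derived there as \emph{necessary} consequences of Definition~\ref{Def:Umbilic}, so the task is to observe that the derivation is reversible and that nothing else is needed. There are exactly two items to check, matching the two conditions of that definition. Before either, I would record that $(\domainl,\intervalIo,\tau,\GammaUU)$ is a genuine metric foliated space: the role of the ``sufficiently small radius $r$'' is precisely to keep the conformal factor $\conffactor$ positive on $B_0(r)$ (automatic when $\curv(t)\ge 0$, and a genuine restriction only when $\curv(t)<0$), so that the induced leaf tensor is a well-defined positive-definite metric and the normal field $V$ exists and is unique.

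The constant-curvature condition (item~2) is immediate. Restricting \eqref{eq:gamma_pre_final} to a leaf $\{\tau=t\}$ sets $d\tau=0$, whence $\bm\theta^A=d\lambda^A$ and the induced metric collapses to the stereographic form \eqref{eq:gammat}. By the self-contained construction of Appendix~\ref{sec:constant_curv_spaces} (in particular \eqref{confflat}), a metric of this shape on $B_0(r)$ is a piece of the constant-curvature-$\curv(t)$ model, as required. Note that the free function $\GamVV$ does not enter the leaf metric and is therefore irrelevant at this stage.

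For the umbilic condition (item~1) I would first identify the normal field. Solving $V(\tau)=1$ and $\GammaUU(V,\partial_A)=0$ directly from \eqref{eq:gamma_pre_final} reproduces \eqref{Expr:V}; in particular the tangent part $V-\partial_\tau$ is exactly the combination \eqref{eq:V=} of the (conformal) Killing fields $\eta^{\ta\tb}$, and this step is manifestly independent of $\GamVV$. I would then split $\pounds_V\GammaUU=\pounds_{\partial_\tau}\GammaUU+\pounds_{V-\partial_\tau}\GammaUU$ and pull back to each leaf. The first piece was already computed in the proof of Proposition~\ref{Vtangent} to give $i_t^\star(\pounds_{\partial_\tau}\GammaUU)=-\Yy_t^{\dimu}\dot\curv(t)\,\GammaUU_t$. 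For the second, since $V-\partial_\tau$ is tangent, the identity $i_t^\star(\pounds_Z T)=\pounds_{Z_t}(i_t^\star T)$ reduces it to $\pounds_{V^\ptos_t}\GammaUU_t$, and feeding \eqref{eq:V=} into the conformal Killing relations \eqref{eq:zeta_eqs} yields $(2C_0(t)\Yy_t^0+2C_A(t)\Yy_t^A)\GammaUU_t$, the $\eta^{0A}$ and $\eta^{AB}$ terms dropping as they are Killing. Summing the two contributions gives $i_t^\star(\pounds_V\GammaUU)=(2C_0\Yy_t^0+2C_A\Yy_t^A-\dot\curv\,\Yy_t^{\dimu})\GammaUU_t$, which is precisely $\UUumbilic_t\GammaUU_t$ once $\UUumbilic$ is defined by \eqref{res:expansion} (equivalently \eqref{eq:main_rel} after inserting \eqref{eq:Ys}). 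This establishes \eqref{eq:umbilic_fol}.

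I do not expect a substantive obstacle, because every ingredient is already assembled in the preceding propositions; the content is the observation that the passage to \eqref{eq:gamma_pre_final}–\eqref{res:expansion} is invertible. The only points demanding care are bookkeeping ones: confirming that $V$, and hence the entire umbilicity computation, is genuinely insensitive to the free function $\GamVV$; verifying that the leaf-level relations \eqref{eq:zeta_eqs} transfer correctly to $\domainl$ through $i_t^\star$; and checking that the $\dot\curv$ contribution coming from the $\tau$-dependence of the conformal factor combines with the conformal-Killing factor to reproduce \eqref{res:expansion} exactly, rather than some shifted expression.
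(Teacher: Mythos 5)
Your proof is correct and follows essentially the same route as the paper's own: check the metric-foliation and constant-curvature conditions directly from \eqref{eq:gamma_pre_final}, identify the normal field with \eqref{Expr:V} so that \eqref{eq:V=} holds, and verify \eqref{eq:umbilic_fol} with $\UUumbilic$ given by \eqref{res:expansion} using the conformal Killing relations \eqref{eq:zeta_eqs}. The paper's version is merely terser, leaving implicit the $\pounds_{\partial_\tau}$ contribution and the role of the small radius $r$, both of which you correctly spell out.
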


\begin{proof}
It is immediate to check  that
$(\domainl, \intervalIo,\tau,\GammaUU)$ is a metric foliation. Condition 2 in
Definition \ref{Def:Umbilic}
is also immediate. For condition 1 note that the normal vector of the foliation is given by
\eqref{Expr:V}, hence also \eqref{eq:V=} holds and the validity of
\eqref{eq:umbilic_fol} with $\UUumbilic$ given in \eqref{res:expansion}
follows easily from the fact that
$\eta^{\ta\tb}_t$ are conformal Killing vectors satisfying
\eqref{eq:zeta_eqs}.
\end{proof}

\subsection{RW foliation}
\label{RWfoliation}
For our characterization purposes, it is of interest to find sufficient
conditions along the curve $\oo$ that guarantee that the umbilicity function
$\UUumbilic$ is homogeneous, i.e. constant on each leaf $\SigmaUUo_t$. To that aim, observe that the umbilicity function \eqref{res:expansion}
is the solution of the equation
(\ref{eq:hess_umbilic}) with boundary data (\ref{eq:at_origin}).
It is obvious from the explicit form in \eqref{res:expansion} that
$\UUumbilic$ is homogeneous if and only if $C_A =0$ and
$2 C_0 \curv + \dot{\curv}=0$.
We translate this into two geometric conditions along the curve $\oo$. We shall say that Condition RW.1 (resp.  RW.2)  hold whenever

\begin{itemize}
\item[RW.1] The spatial gradient of the
  umbilicity function $\UUumbilic$ vanishes on the curve $\oo$, i.e.
  $X(\UUumbilic) |_{\oo(t)} =0$ for all $t \in \intervalIo$ and for all $X \in T_{\oo(t)}
  \SigmaUU_t$.
\item[RW.2] The spatial Laplacian of the
  umbilicity function $\UUumbilic$ vanishes on $\oo$, i.e.
  $\Delta_t \UUumbilic |_{\oo(t)} =0$ for all $t \in \intervalIo$.
\end{itemize}

From \eqref{eq:at_origin}
it is immediate to check that
Condition RW.1 is equivalent to $C_A=0$. Combining
\eqref{eq:at_origin} with the trace of \eqref{eq:hess_umbilic} at $\oo(t)$
it follows that Condition RW.2 is equivalent to
$2 C_0 \curv + \dot{\curv}=0$. Thus, we have the following lemma.
\begin{lemma}
  \label{RWconds}
  Assume the setup of Proposition \ref{Umbilicfoliation}. The
  umbilicity function $\UUumbilic$ is homogeneous, i.e. there exists
  a function $\UURW\in C^\infty(\intervalIo,\mathbb{R})$
  such that   $\UUumbilic=\UURW\circ\tau=\onman{\UURW}$, if and only if
  conditions RW.1 and RW.2 hold. In such case
  $C_A=0$, $C_0 = \frac{1}{2} \UURW$, and the functions $\curv$
  and $\UURW$ are linked by
  \begin{equation}
    \label{eq:UURW}
    \dot\curv+\UURW\curv=0.
    \end{equation}
\end{lemma}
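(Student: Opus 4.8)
The plan is to exploit the explicit closed form \eqref{res:expansion} of the umbilicity function obtained in Proposition \ref{Umbilicfoliation}. Homogeneity of $\UUumbilic$ can be read off algebraically from that expression, and the resulting conditions on the functions $C_0$ and $C_A$ can then be matched to the geometric conditions RW.1 and RW.2 by means of the boundary identities \eqref{eq:at_origin} and the trace of the Hessian equation \eqref{eq:hess_umbilic}.

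First I would determine when \eqref{res:expansion} is independent of the $\lambda^A$. The leading term $2\onman{C}_0$ is already constant on each leaf, while the remaining term vanishes at the origin $\lambda=0$. Since the conformal prefactor $\conffactor^{-1}$ is nowhere zero, $\UUumbilic$ is homogeneous precisely when the polynomial $2\onman{C}_A\lambda^A - (\onman{C}_0\onman{\curv}+\tfrac12\onman{\dot\curv})\lsq$ vanishes identically near the origin. Separating its linear and quadratic parts, this is equivalent to $C_A=0$ for every $A$ together with $2C_0\curv+\dot\curv=0$.

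Next I would translate these two algebraic conditions into RW.1 and RW.2. The second relation in \eqref{eq:at_origin}, namely $X_A(t)(\UUumbilic)|_{\oo(t)}=2C_A(t)$, together with the fact that $\{X_A(t)\}$ is a basis of $T_{\oo(t)}\SigmaUU_t$, shows immediately that RW.1 is equivalent to $C_A=0$. For RW.2 I would take the $\GammaUU_t$-trace of \eqref{eq:hess_umbilic}: because the trace of $\Hess_t$ is the Laplacian $\Delta_t$ and $\GammaUU_t$ has trace $\dim$, one gets $\Delta_t\UUumbilic_t=-\dim(\curv\,\UUumbilic_t+\dot\curv)$; evaluating at $\oo(t)$ and using $\UUumbilic|_{\oo(t)}=2C_0(t)$ yields $\Delta_t\UUumbilic|_{\oo(t)}=-\dim(2C_0\curv+\dot\curv)$. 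As $\dim\geq 3$, RW.2 is equivalent to $2C_0\curv+\dot\curv=0$.

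Chaining these equivalences shows that RW.1 and RW.2 hold together if and only if $\UUumbilic$ is homogeneous. In that case $\UUumbilic$ coincides on each leaf with its value $2C_0(t)$ at $\oo(t)$, so defining $\UURW:=2C_0$ gives $\UUumbilic=\onman{\UURW}$ and $C_0=\tfrac12\UURW$; inserting this into $2C_0\curv+\dot\curv=0$ reproduces \eqref{eq:UURW}. I do not anticipate any genuine obstacle here, as most of the work is already carried by Proposition \ref{Umbilicfoliation}; the only points needing slight care are the matching of linear and quadratic coefficients and the correct trace factor $\dim$, both of which are routine.
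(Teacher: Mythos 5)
Your proposal is correct and follows essentially the same route as the paper, whose own argument (given in the text immediately preceding the lemma) also reads off the homogeneity conditions $C_A=0$ and $2C_0\curv+\dot\curv=0$ from the explicit form \eqref{res:expansion}, identifies RW.1 with $C_A=0$ via \eqref{eq:at_origin}, and identifies RW.2 with $2C_0\curv+\dot\curv=0$ via the trace of \eqref{eq:hess_umbilic} evaluated at $\oo(t)$. Your write-up merely makes explicit the linear/quadratic coefficient matching and the trace factor $-\dim$, which the paper leaves as ``immediate''.
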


The following result determines
the form of $\GammaUU$ in the case when
the geometric conditions RW.1 and RW.2 hold.

\begin{proposition}[RW foliation]\label{res:RWfoliation}
  Assume the setup of Proposition \ref{Umbilicfoliation} 
  and, in addition, that
  the geometric conditions RW.1 and RW.2 hold. Then, there exist
  coordinates $\{\tau, z^A\}$ on $\domainl$, a nowhere zero function $a \in C^{\infty}(\intervalIo, \mathbb{R})$, a constant $\curvN$ and
  functions $\FF_{0B}, \FF_{AB} = - \FF_{BA} \in C^{\infty}(\intervalIo,\mathbb{R})$ 
  such that 
    \begin{align}
  &\GammaUU=\GamVV d\tau^2+\onman{a}^2\left(1+\frac{\curvN}{4} \zsq\right)^{-2}
    \delta_{AB} \bm\omega^A\bm\omega^B,  \label{eq:gamma_RW_a}\\
  &\bm\omega^A=dz^A
    -\left\{
    \frac{\curvN}{2\onman{a}}\onman{\FF}_{0B}
    \left(z^Bz^A-\frac{1}{2}\zsq\delta^{AB}\right)+\frac{1}{\onman{a}}\onman{\FF}_{0B}\delta^{BA}
    +\onman{\FF}_{BC}z^B\delta^{CA}\right\}d\tau,\nonumber
    \end{align}
    where $\zsq  := \delta_{AB} z^A z^B$. In addition, the curve $\oo$ is
    given by $\{ \tau =t, z^A=0\}$ and the frame
    $\{X_A(t)\}$ is $X_A(t) = a(t)^{-1} \partial_{z^A}|_{\oo(t)}$.
 \end{proposition}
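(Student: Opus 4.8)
The plan is to extract the relevant algebraic data from Lemma \ref{RWconds} and then absorb all the remaining $\tau$-dependence of the coordinate expression \eqref{eq:gamma_pre_final} into a single scale factor by rescaling the stereographic coordinates $\lambda^A$. Under RW.1 and RW.2, Lemma \ref{RWconds} supplies $C_A=0$, $C_0=\tfrac12\UURW$ and the relation \eqref{eq:UURW}, $\dot\curv+\UURW\curv=0$. Setting $C_A=0$ in \eqref{eq:gamma_pre_final} already deletes the $\onman{C}_B$-terms from $\bm\theta^A$, so only the pieces built from $\onman{C}_0$, $\onman{\FF}_{0B}$ and $\onman{\FF}_{BC}$ remain, and what is left is a pure change of spatial coordinates on $\domainl$.

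First I would introduce the scale factor $a\in C^{\infty}(\intervalIo,\mathbb{R}^+)$ as the nowhere-vanishing solution of the linear ODE $\dot a=C_0\,a$, which is fixed up to a positive multiplicative constant. Its decisive feature is that the very same choice controls both the curvature and the drift. For the curvature, $\frac{d}{dt}(\curv\,a^2)=(\dot\curv+2C_0\curv)a^2=(\dot\curv+\UURW\curv)a^2=0$ by \eqref{eq:UURW}, so $\curvN:=\curv(t)\,a(t)^2$ is constant; as $a>0$ its sign is fixed, and the free multiplicative constant in $a$ (which leaves $\dot a=C_0 a$ intact) can be used to normalise $\curvN\in\{-1,0,1\}$, with $\curvN=0$ in the flat case $\curv\equiv 0$. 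I then set $z^A:=\lambda^A/\onman{a}$, equivalently $\lambda^A=\onman{a}\,z^A$, keeping $\tau$ unchanged, so that $d\lambda^A=\onman{a}\,dz^A+\onman{\dot a}\,z^A\,d\tau$.

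The core of the argument is to feed this into $\bm\theta^A$. The piece $\onman{\dot a}\,z^A\,d\tau$ from $d\lambda^A$ and the drift piece $\onman{C}_0\lambda^A\,d\tau$ combine into $(\onman{\dot a}-\onman{C}_0\onman{a})z^A\,d\tau$, which vanishes precisely because $\dot a=C_0 a$; this is exactly why the target $\bm\omega^A$ has no term linear in $z^A$ alone. In the surviving $\onman{\FF}$-terms I would substitute $\lambda^A=\onman{a}\,z^A$ and use $\onman{\curv}\,\onman{a}^2=\curvN$; after pulling out one factor of $\onman{a}$ these reproduce the brace of $\bm\omega^A$ with coefficients $\frac{\curvN}{2\onman{a}}\onman{\FF}_{0B}$, $\frac{1}{\onman{a}}\onman{\FF}_{0B}$ and $\onman{\FF}_{BC}$, so that $\bm\theta^A=\onman{a}\,\bm\omega^A$. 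The conformal factor becomes $1+\frac{\onman{\curv}}{4}\lsq=1+\frac{\onman{\curv}}{4}\onman{a}^2\zsq=1+\frac{\curvN}{4}\zsq$, whence the spatial block of \eqref{eq:gamma_pre_final} turns into $\onman{a}^2\left(1+\frac{\curvN}{4}\zsq\right)^{-2}\delta_{AB}\bm\omega^A\bm\omega^B$, while $\GamVV\,d\tau^2$ is untouched because $\GamVV=\GammaUU(V,V)$ is intrinsic and $\tau$ is kept fixed. This is \eqref{eq:gamma_RW_a}. Finally $\lambda^A=0\Leftrightarrow z^A=0$ identifies $\oo$ with $\{\tau=t,\,z^A=0\}$, and from $\partial_{\lambda^A}=\onman{a}^{-1}\partial_{z^A}$ together with $X_A(t)=\partial_{\lambda^A}|_{\oo(t)}$ one gets $X_A(t)=a(t)^{-1}\partial_{z^A}|_{\oo(t)}$.

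The step demanding most care — more a conceptual crux than a genuine obstacle — is checking that one single rescaling $\dot a=C_0 a$ simultaneously annihilates the $z^A\,d\tau$ drift and freezes $\curv\,a^2$; both facts hinge on the identity $C_0=\tfrac12\UURW$ and on \eqref{eq:UURW} provided by Lemma \ref{RWconds}. The only genuinely independent point is the normalisation of $\curvN$ to $\{-1,0,1\}$ through the free constant in $a$, together with the separate treatment of the degenerate flat case.
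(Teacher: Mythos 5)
Your proposal is correct and follows essentially the same route as the paper: Lemma \ref{RWconds} gives $C_A=0$, $C_0=\tfrac12\UURW$ and \eqref{eq:UURW}, the scale factor is defined by the same ODE $\dot a = C_0 a = \tfrac12 \UURW a$, the constancy of $\curv a^2$ and the rescaling $\lambda^A=\onman{a}\,z^A$ then yield $\bm\theta^A=\onman{a}\,\bm\omega^A$ and the stated form of $\GammaUU$. The only additions (the positivity of $a$ and the normalisation of $\curvN$ to $\{-1,0,1\}$) are harmless extras not required by the statement.
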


  \begin{proof}
    By Lemma \ref{RWconds} we have $C_A=0$, $C_0 = \frac{1}{2} \UURW$,
    and \eqref{eq:UURW}
    holds. Inserting into \eqref{eq:gamma_pre_final} yields
  \begin{align}
  &\GammaUU=\GamVV d\tau^2+\conffactor^{-2}
    \delta_{AB} \bm\theta^A\bm\theta^B,\label{eq:gamma_RW_fol}\\
  &\bm\theta^A=d\lambda^A
    -\left\{\frac{1}{2}\onman{\UURW}\lambda^A+
    \frac{\onman{\curv}}{2}\onman{\FF}_{0B}
    \left(\lambda^B\lambda^A-\frac{1}{2}\lsq\delta^{AB}\right)+\onman{\FF}_{0B}\delta^{BA}
    +\onman{\FF}_{BC}\lambda^B\delta^{CA}\right\}d\tau. \label{thetaA}
  \end{align}
  Perform the coordinate change $\lambda^A = \onman{a} z^A$ where
  $a(t)$ is any non-zero solution of the ODE
  \begin{equation}
    \dot{a} = \frac{1}{2} a \UURW. \label{ODE:s}
  \end{equation}
  The function $\curv a^2$ satisfies
  \begin{equation*}
    \frac{d}{dt} \left ( \curv a^2 \right ) = \dot{\curv} a^2 + 2
     \curv a \dot{a} =0
  \end{equation*}
 because of \eqref{ODE:s} and \eqref{eq:UURW}.
  Thus $\curv = \curvN a^{-2}$ for some constant $\curvN$.
  Since
\begin{equation*}
  d\lambda^A = \frac{1}{2}\onman{a} \onman{\UURW} z^A d \tau
  + \onman{a} d z^A
\end{equation*}
the covector $\bm \theta^A$ in \eqref{thetaA} becomes
\begin{equation*}
  \bm \theta^A =
    \onman{a} \left ( d z^A -\left\{
    \frac{\curvN}{2\onman{a}}\onman{\FF}_{0B}
    \left(z^Bz^A-\frac{1}{2}\zsq\delta^{AB}\right)+\frac{1}{\onman{a}}\onman{\FF}_{0B}\delta^{BA}
  +\onman{\FF}_{BC}z^B\delta^{CA}\right\}d\tau \right ).
\end{equation*}
After defining $\bm \omega^A := \onman{a}^{-1} \bm \theta^A$, the form
of the metric \eqref{eq:gamma_RW_fol} follows. Clearly, the
curve $\oo(t)$ is defined by $\{ \tau =t, z^A=0\}$ and the claim on the basis
$\{X_A(t)\}$ follows from
$\partial_{z^A} = \onman{a} \partial_{\lambda^A}$.
\end{proof}

Observe that the form of the normal vector of the foliation $V$ in the coordinates $\{\tau, z^A\}$ can be obtained directly from \eqref{eq:gamma_RW_a} and reads
\begin{equation}\label{eq:V_in_z}
  V = \partial_{\tau} + 
     \left\{\frac{\curvN}{2\onman{a}}\onman{\FF}_{0A}\left(z^Az^B-\frac{1}{2}\zsq \delta^{AB} \right)+
       \frac{1}{\onman{a}}\onman{\FF}_{0A}\delta^{AB}+\onman{\FF}_{AC}z^A\delta^{CB}\right\}
  \partial_{z^B}. 
\end{equation}
Note that $\partial_{\tau}$ in this expression is {\it not} the geometrically defined  vector $\W$ because $\partial_{\tau}$ is a
different vector field in the coordinates $\{ \tau,\lambda^A\}$ than
in the coordinates $\{ \tau, z^A\}$.
Explicitly, in the latter coordinates we have
  \begin{equation}\label{eq:W_in_z}
    \W=\partial_\tau-\frac{\onman{\dot{a}}}{\onman{a}}z^A\partial_{z^A},
  \end{equation}
and $\W|_{\oo(t)}=\partial_\tau|_{\oo(t)}$.
 
An umbilic foliation $(\domainl,\intervalIo,\tau, \GammaUU)$ for which Proposition
  \ref{res:RWfoliation} holds will be called a {\bf RW foliation.}

\section{ New characterization of RW geometries}
\label{NewCharac}
In the previous section the tensor $\GammaUU$ was allowed to be degenerate, Lorentzian or Riemannian depending on the point. This is why we could say nothing about the function $\GamVV$ that appears in \eqref{eq:gamma_pre_final}. Indeed,
this function is related to
$V$ by means of
\begin{equation}
  \GamVV = \GammaUU(V,V)
\end{equation} 
(this follows immediately from $\bm \theta^A (V) = 0$, cf. \eqref{eq:gamma_pre_final} and \eqref{Expr:V}, and $d \tau(V)=1$). Thus,  in order to restrict $\GamVV$ we need to make further assumptions on $\GammaUU$. 

In this section we assume that $\GammaUU$ is a metric. To emphasize this fact 
we replace $\GammaUU$ by $g$ from now on.
Although our main interest is when $g$ is Lorentzian we still allow for the Riemannian case as this entails no extra effort.
$g$ being a metric is equivalent to $g(V,V)$ being nowhere zero.
Thus, there exists a positive function $H \in C^{\infty} (\UU,\mathbb{R})$
and a sign $\signature \in \{ -1, 1\}$ such that
\begin{equation}
  \label{GamVV2}
  g(V,V) = - \signature H^2.
\end{equation}
The space $(\UU,g)$ is a semi-Riemannian
manifold with signature   $\{-\signature,1 ,\cdots, 1\}$. Since $\SigmaUU_t$ are now  spacelike
hypersurfaces of $(\UU,g)$, they admit a unique unit normal $\flow$ pointing along increasing values of $\tau$, i.e. satisfying
\begin{equation*}
  g (\flow,\flow) = -\signature, \qquad \flow(\tau) >0.
\end{equation*}
By construction $V$ and $\flow$ are proportional to each other (because both are $g$-orthogonal to $T_p \SigmaUU_\tau(p)$ for all $p \in \UU$).  Combining with
\eqref{GamVV2} it follows that
\begin{equation}
  V = H \flow. \label{Vflow}
\end{equation}
For any vector $X$ we denote by $\bm{X}$ the covector obtained by lowering indices with $g$. It follows directly from its definition that
$\bm{V}$ is proportional to $d\tau$ and, in fact,
\begin{equation}
  \bm{V} = - \signature H^2 d\tau \label{Vtau}
\end{equation}
because writing $\bm{V} = P d \tau$ we have
\begin{equation*}
  1=  V(\tau) = d\tau (V) =  P^{-1} \bm{V} (V) = P^{-1} g(V,V) = - \signature P^{-1} H^2 \qquad \Longleftrightarrow \quad P = - \signature H^2.
\end{equation*}
Consequently,
\begin{equation}
  \bm{\flow} = - \signature H d\tau. \label{udtau}
\end{equation}
The hypersurface  $\SigmaUU_t$ admits a second fundamental form
$K_t$ which can be computed using
\begin{equation}\label{secFFpre}
  K_t = \frac{1}{2} i_t^{\star} \left ( \pounds_\flow g \right ).
\end{equation}
Thus,
\begin{equation}
  K_t = \frac{1}{2} i_t^{\star} \left ( \pounds_{H^{-1} V} g \right )
  = \frac{1}{2 H_t} i_t^{\star} \left ( \pounds_{V} g \right ),
  \label{secFF}
\end{equation}
where we applied the straightforward identity
\begin{equation*}
  \pounds_{f X} g = f \pounds_X g + df \otimes \bm{X} +
  \bm{X} \otimes df
\end{equation*}
together with \eqref{Vflow}. By \eqref{secFF}, the notion of $(\UU,\intervalI,\tau,g)$ being an umbilic foliation, cf. \eqref{eq:umbilic_fol}, is equivalent to the second fundamental form $K_t$ of $\SigmaUU_t$ being pure trace (also called {\it totally umbilic} or {\it shear-free}). In such case, the second fundamental form
reads
\begin{equation}
  K_t = \frac{\UUumbilic_t}{2H_t} g_t. \label{umbilic_Expansion}
\end{equation}

Note that the class of spacetimes given by $(\UU,g)$ produces a class of cosmological models
with spatial constant curvature
$\curv(t)$ that can change sign from leaf to leaf.
Observe that in a RW cosmology the sign of $\curv(t)$
cannot change. The class $(\UU,g)$ thus belongs to the ``SI''
(space-isotropic) type in the classification produced in \cite{Avalos2022},
where the original family of spacetimes recently
constructed in \cite{Sanchez2023}
is analysed as a paradigmatic example.

Now, recall the expansion (also called mean curvature)
of a hypersurface is the trace of its second fundamental form. 
It is of interest to find the most general form of $g$
when not only $K_t$ is pure trace, but in addition the expansion is homogeneous and non-zero, i.e. of the form
$\onman{\expanshom}$
for some smooth nowhere zero function
$\expanshom \in C^{\infty}(\intervalI, \mathbb{R})$.
\begin{proposition}[Homogeneous expansion]\label{res:non_geodesic}
  Let $(\UU,g)$ be a semi-Riemannian manifold of dimension $\dim +1$, $\dim \geq 3$,  
  and signature $\{ - \signature, +1, \cdots, +1\}$
  endowed with a function
  $\tau \in C^{\infty} (\UU,\intervalI)$,
  for some interval $\intervalI \subset \mathbb{R}$,
  satisfying $d\tau \neq 0$ everywhere.
  Let $\SigmaUU_t$ be the level sets of $\tau$ and assume that:
  \begin{itemize}
  \item[(i)] The induced metric $g_t$ on each $\SigmaUU_t$ is of constant curvature $\curv(t)$.
  \item[(ii)] There is a nowhere zero function $\expanshom \in C^{\infty} (\intervalI,
    \mathbb{R})$ such that the second fundamental form $K_t$ of
    $\SigmaUU_t$  is of the form
    \begin{equation}
      K_t = \frac{\expanshom(t)}{\dim} g_t. \label{Expansion}
    \end{equation}
  \end{itemize}
  Select any curve $\oo$ parametrized by $t$ and
  an orthonormal frame $\{ X_A(t)\}$ of $T_{\oo(t)} \SigmaUU_{t}$ smoothly depending on $t$. Then, there is a neighbourhood $\domainl$ of $\oo$ and
  unique coordinates $\{ \tau,\lambda^A\}$ on $\domainl$ such that the metric $g$
  takes the form \eqref{eq:gamma_pre_final} with $\GamVV$ given by
  \begin{equation}
    \label{expressionA}
    \GamVV = - \frac{\signature \dim^2}{\onman{\expanshom}^2}
    \left ( 
    \onman{C}_0 
    + \left ( 1+\frac{\curv}{4}\lsq \right )^{-1} 
    \left ( \onman{C}_A\lambda^A -
    \left (\onman{C}_0\onman{\curv}+ \frac{1}{2} \onman{\dot\curv}
    \right  ) \frac{\lsq}{2} \right ) \right )^2.
  \end{equation}
\end{proposition}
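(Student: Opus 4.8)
The plan is to recognize that hypotheses (i) and (ii) are precisely the two conditions of Definition \ref{Def:Umbilic}, so that the entire machinery of Proposition \ref{Umbilicfoliation} applies, and then to pin down the only undetermined ingredient, the coefficient $\GamVV$, using the relation between the umbilicity function and the (now homogeneous) expansion.

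First I would observe that hypothesis (ii) states that the second fundamental form $K_t$ is pure trace, which by the discussion around \eqref{secFF} is exactly equivalent to $(\UU,\intervalI,\tau,g)$ being an umbilic foliation, i.e. condition 1 of Definition \ref{Def:Umbilic}. Together with hypothesis (i), which is condition 2, this means $(\UU,\intervalI,\tau,g)$ is an umbilic foliation with constant curvature leaves. I may therefore invoke Proposition \ref{Umbilicfoliation} with the chosen curve $\oo$ and frame $\{X_A(t)\}$: this produces the neighbourhood $\domainl$, the unique coordinates $\{\tau,\lambda^A\}$, and the metric in the form \eqref{eq:gamma_pre_final}, with the umbilicity function $\UUumbilic$ given explicitly by \eqref{res:expansion}. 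The only thing left to determine is the coefficient $\GamVV$ of $d\tau^2$, since Proposition \ref{Umbilicfoliation} (derived in a setting where $\GammaUU$ may be degenerate) says nothing about it.

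Next I would fix the lapse $H$. Since $g$ is now a genuine metric, $\GamVV = g(V,V) = -\signature H^2$ as in \eqref{GamVV2}, so it suffices to express $H$ in terms of the known data $\UUumbilic$ and $\expanshom$. By \eqref{umbilic_Expansion} the second fundamental form of an umbilic foliation is $K_t = \frac{\UUumbilic_t}{2 H_t} g_t$; comparing with the assumed form \eqref{Expansion}, $K_t = \frac{\expanshom(t)}{\dim} g_t$, the two proportionality factors must agree, giving $\frac{\UUumbilic_t}{2H_t} = \frac{\expanshom(t)}{\dim}$ and hence $H = \frac{\dim \UUumbilic}{2 \onman{\expanshom}}$. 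It is worth stressing that here it is the lapse $H$, and not the umbilicity function, that carries the spatial inhomogeneity: homogeneity of the expansion $\expanshom$ is genuinely weaker than homogeneity of $\UUumbilic$, which is precisely why $\GamVV$ will depend on $\lambda^A$.

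Finally I would substitute. From $\GamVV = -\signature H^2$ and the expression for $H$ I obtain $\GamVV = -\frac{\signature \dim^2 \UUumbilic^2}{4\onman{\expanshom}^2} = -\frac{\signature \dim^2}{\onman{\expanshom}^2}\left(\frac{\UUumbilic}{2}\right)^2$. Inserting the explicit $\UUumbilic$ from \eqref{res:expansion} and noting that $\UUumbilic/2$ is exactly the bracketed expression appearing in \eqref{expressionA} yields the claimed formula. I expect no genuine obstacle in this argument; the only point requiring a little care is the sign and positivity bookkeeping for $H$, but since $\GamVV$ depends only on $H^2$ this is immaterial to the final expression.
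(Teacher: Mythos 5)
Your proposal is correct and follows essentially the same route as the paper's proof: both recognize that hypotheses (i) and (ii) place the setup within Proposition \ref{Umbilicfoliation}, then determine the lapse by comparing \eqref{Expansion} with \eqref{umbilic_Expansion} to get $H = \dim \UUumbilic / (2\onman{\expanshom})$, and finally substitute \eqref{res:expansion} into $\GamVV = -\signature H^2$. Your write-up merely spells out in more detail the steps the paper compresses into three lines.
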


\begin{proof}
  All the hypothesis of Proposition \ref{Umbilicfoliation} hold.
  Comparing \eqref{Expansion} with \eqref{umbilic_Expansion} we get
  $H_t = \dim \UUumbilic_t/(2 \expanshom(t))$ or equivalently
  \begin{equation*}
    H = \frac{\dim \UUumbilic}{2 \onman{\expanshom}}.
  \end{equation*}
  Since $\GamVV= - \signature H^2$ and $\UUumbilic$ is given by
  \eqref{res:expansion}
  the result follows.
\end{proof}

We are ready to state and prove our main results of the paper. The only extra restriction we need is that the unit normal vector $\flow$ is geodesic.
In the first theorem we find necessary and sufficient conditions for any
metric admitting a foliation by
totally umbilic hypersurfaces of constant curvature and with geodesic unit normal to be locally isometric to a RW space. Again we allow for both Lorentzian and Riemannian signatures. In the second theorem we write down  the RW metric in canonical coordinates adapted to an arbitrary transverse curve carrying an orthonormal frame
adapted to the foliation.

\begin{theorem}[Local characterization of RW in terms of a curve]
  \label{main1}
  Let $(\UU,g)$ be a semi-Riemannian manifold of dimension $\dim +1$, $\dim \geq 3$,
  and signature $\{ - \signature, +1, \cdots, +1\}$
  endowed with a function
  $\tau \in C^{\infty} (\UU,\intervalI)$,
  for some interval $\intervalI \subset \mathbb{R}$, satisfying $d\tau \neq 0$ everywhere.
  Let $\SigmaUU_t$ be the level sets of $\tau$ and assume that each
  $\SigmaUU_t$ is connected and that:
  \begin{itemize}
  \item[(i)] The induced metric $g_t$ on each $\SigmaUU_t$ is of constant curvature $\curv(t)$.
  \item[(ii)] The second fundamental form of $\SigmaUU_t$ is pure trace, i.e. there is a function $\expansion \in C^{\infty}(\UU,\mathbb{R})$ such that
    \begin{equation}
      K_t = \frac{\expansion_t}{\dim} g_t. \label{Expansion2}
    \end{equation}
  \item[(iii)] The unit normal $\flow$ to $\SigmaUU_t$ is geodesic.
  \end{itemize}
  Then, $(\UU,g)$ is locally a RW space if and only if there exists
     a curve $\oo$ parametrized by $t$ and defined all over $\intervalI$ such that
  the expansion $\expansion$
    along the curve satisfies
  \begin{equation}
   \grad_{g_t}\expansion_t |_{\oo(t)}=0, \qquad \Delta_t \expansion_t |_{\oo(t)} =0.
    \label{conditions}
  \end{equation}
\end{theorem}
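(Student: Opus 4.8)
The plan is to reduce this characterization to the already-established Theorem \ref{Miguel} by showing that the two local conditions \eqref{conditions} along the single curve $\oo$, when combined with hypotheses (i)--(iii), force the expansion $\expansion$ to be homogeneous (constant on each leaf). Once this is achieved, hypothesis (iii) of the present theorem (geodesic $\flow$) together with the pure-trace condition (ii) and the constant-curvature condition (i) supply exactly hypotheses (i), (ii), and the constant-curvature requirement of Theorem \ref{Miguel}; condition (iii) of Theorem \ref{Miguel} (that $\grad_g \UUumbilic$ be parallel to $\flow$) will follow from the homogeneity of $\UUumbilic$. This yields the conclusion that $(\UU,g)$ is locally RW.

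The first step is to translate the hypotheses into the umbilic-foliation language of Section \ref{sec:geo_H}. By \eqref{umbilic_Expansion} and \eqref{Expansion2}, the pure-trace condition makes $(\UU,\intervalI,\tau,g)$ an umbilic foliation with constant-curvature leaves, with umbilicity function $\UUumbilic$ related to the expansion by $\expansion_t = \dim\, \UUumbilic_t/(2 H_t)$. The key subtlety is that the conditions \eqref{conditions} are phrased in terms of $\expansion$, whereas Lemma \ref{RWconds} and conditions RW.1, RW.2 are phrased in terms of $\UUumbilic$. I would next show that, \emph{because $\flow$ is geodesic}, the conditions on $\expansion$ along $\oo$ are equivalent to the corresponding conditions RW.1 and RW.2 on $\UUumbilic$. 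The geodesic assumption is what makes $H$ behave simply: since $\nabla_\flow \flow = 0$ and $\bm{\flow} = -\signature H d\tau$ by \eqref{udtau}, the lapse $H$ must be constant on each leaf (the acceleration of $\flow$ equals the spatial gradient of $\log H$), so $H = \onman{H}$ for some function of $\tau$ alone. With $H$ constant on leaves, $\grad_{g_t}\expansion_t = (\dim/(2\onman{H})) \grad_{g_t}\UUumbilic_t$ and $\Delta_t \expansion_t = (\dim/(2\onman{H}))\Delta_t\UUumbilic_t$, so \eqref{conditions} holds if and only if RW.1 and RW.2 hold.

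Having established the equivalence, Lemma \ref{RWconds} immediately gives that $\UUumbilic$ is homogeneous, hence so is $\expansion$. This is the decisive point. For the converse direction, if $(\UU,g)$ is locally RW then the warped-product structure of Definition \ref{def:RW} forces the expansion to be a function of $\tau$ alone, so its spatial gradient and Laplacian vanish identically on every leaf, in particular along any curve $\oo$; one simply picks any such curve defined over all of $\intervalI$. Finally, for the forward direction I would verify the remaining hypotheses of Theorem \ref{Miguel}: geodesic $\flow$ is (iii) here, pure-trace $K_t$ gives the shear-free condition (ii) there (with $\UUumbilic$ the conformal factor of $\pounds_\flow g$ on tangent vectors), constant curvature of the leaves is assumed, and $\grad_g \UUumbilic$ parallel to $\flow$ follows since $\UUumbilic$ is now a function of $\tau$ alone. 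Invoking the local part of Theorem \ref{Miguel} concludes that $(\UU,g)$ is locally RW.

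I expect the main obstacle to be the rigorous handling of the lapse function $H$ and, more precisely, establishing cleanly that the geodesic condition forces $H$ to be spatially constant and that this is exactly what converts the $\expansion$-conditions into the $\UUumbilic$-conditions. A secondary technical point is ensuring that the connectedness of each $\SigmaUU_t$ is used where needed (homogeneity of a function whose gradient vanishes requires connectedness to conclude it is constant, but here the argument proceeds through the explicit form \eqref{res:expansion} of $\UUumbilic$, so connectedness enters mainly in matching the local-RW conclusion to a single leaf-wise constant). Everything else is a matter of carefully citing Proposition \ref{Umbilicfoliation}, Lemma \ref{RWconds}, and Theorem \ref{Miguel} in the right order.
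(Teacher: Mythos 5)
Your reduction to Theorem \ref{Miguel} is the right strategy, and the first half of your argument matches the paper's proof: the geodesic condition forces the lapse $H$ to be a function of $\tau$ alone (via the acceleration formula for $\flow$), hence the conditions \eqref{conditions} on $\expansion$ are equivalent to conditions RW.1--RW.2 on $\UUumbilic$, and Lemma \ref{RWconds} gives homogeneity of $\UUumbilic$; the ``only if'' direction is also handled exactly as in the paper. However, there is a genuine gap at the final step. Proposition \ref{Umbilicfoliation} and Lemma \ref{RWconds} operate only on the neighbourhood $\domainl$ of the curve $\oo$ where the stereographic coordinates exist: the explicit form \eqref{res:expansion} of $\UUumbilic$, and therefore its homogeneity, is established only on $\domainl$, not on all of $\UU$. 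Consequently you are entitled to apply the local part of Theorem \ref{Miguel} only to $(\domainl, g|_{\domainl})$, and what you obtain is that a neighbourhood of $\oo$ is locally RW --- not that $(\UU,g)$ is. Hypothesis (iii) of Theorem \ref{Miguel}, namely that $\grad_g\,\UUumbilic$ is pointwise parallel to $\flow$, must be verified everywhere on $\UU$, and nothing in your argument does this: a priori $\expansion_t$ could fail to be constant on the portion of a leaf $\SigmaUU_t$ lying outside $\domainl$, since the equations controlling $\UUumbilic$ (such as \eqref{eq:hess_umbilic}) were themselves derived only in the stereographic chart.

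The paper closes this gap with a propagation argument that is absent from your proposal, and it is precisely here that the connectedness of each leaf is used --- not as the ``secondary technical point'' you describe. Once $\domainl$ is known to be locally RW, the ``only if'' direction applies to every curve $\oo'$ inside $\domainl$, producing a new domain $\domainl'$; by Remark \ref{rem:extension} together with the homogeneity of constant-curvature leaves, $\domainl'$ contains a geodesic ball of radius $r(t)$ around $\oo'(t)$, with $r(t)$ independent of the base point, so these new domains genuinely extend beyond $\domainl$. Since each $\SigmaUU_t$ is connected, chains of such overlapping domains reach every point of $\UU$, whence hypothesis (iii) of Theorem \ref{Miguel} holds everywhere and that theorem can finally be invoked on $(\UU,g)$ itself. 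Without this step your argument proves a strictly weaker statement than Theorem \ref{main1}.
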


\begin{proof}
  We start with the  ``only if'' part, i.e. we assume that $(\UU,g)$
    is locally isometric to  RW. The expansion
    $\expansion$ in a RW space is constant on each orbit of the isometry group, so
    for any transversal curve $\oo$ conditions \eqref{conditions} hold.

    For the ``if'' part, we denote by $\nablag_i$ the  Levi-Civita derivative of $g$ and use
  abstract indices $i,j,\ldots$ on $\UU$. By items (i) and (ii),
  $(\UU,\intervalI,\tau,g)$ is an umbilic foliation with constant curvature leaves (cf. Definition \ref{Def:Umbilic})  and the unit normal $\flow$ to
    the leaves $\SigmaUU_t$ is given by \eqref{udtau}. We compute the acceleration of $\flow$ as follows (see e.g. \cite{Gourgoulhon})
    \begin{align}
    (\nablag_\flow\flow)_i =
    &
      - \signature \flow(H)\nablag_i\tau-
      \signature H \flow^j\nablag_j\nablag_i\tau
      = \frac{\flow(H)}{H} \flow_i
      - \signature H \nablag_i \left ( \flow^j \nablag_j \tau \right )
      -\flow_j\nablag_i \flow^j
            \nonumber \\
    &       =  \frac{\flow(H)}{H} \flow_i
      - \signature H \nablag_i \left ( H^{-1} \right )
      = \frac{\flow(H)}{H} \flow_i + \signature \frac{1}{H} \nablag_i H, \label{b}
    \end{align}
    where we have used $\flow$ is unit in the third equality.
  It follows from \eqref{b} that
  $\flow$ is geodesic if and only if there exists a function
  $T \in C^{\infty} (\intervalI,\mathbb{R} )$ such that
  $H = \onman{T}$. From \eqref{Expansion2}  and \eqref{umbilic_Expansion}
  we have
  \begin{equation*}
    \UUumbilic = \frac{2}{\dim} \onman{T} \expansion.
  \end{equation*}
  Conditions \eqref{conditions} are therefore equivalent to RW.1 and RW.2
  in subsection  \ref{RWfoliation}. Therefore, by Proposition \ref{Umbilicfoliation}
    and Lemma \ref{RWconds} 
    there is a neighbourhood $\domainl$ of $\oo$ 
    so that $(\domainl,\intervalI,\tau,g)$ is
    a RW foliation and the
  umbilicity function $\UUumbilic$
  is homogeneous on $\Sigma_t$, that is,
  its gradient is pointwise parallel to $\flow$.
  We can thus apply the local result of Theorem \ref{Miguel}
  to conclude that $\domainl$ with the restricted metric
  is locally a RW space. It only remains to show that this claim holds for $\domainl = \UU$.

We can now take any other curve $\oo'$ parametrized by $t$,
  and defined all over $\intervalI$ or any subinterval thereof, in the
  domain $\domainl$.
  By the ``only if'' part of the theorem, conditions \eqref{conditions} hold also for
  $\oo'$, and we can thus construct its corresponding domain $\domainl'$.
  
  By homogeneity of the leaves, the construction
  of the stereographic coordinates $\{z^A\}$
  in Proposition \ref{res:RWfoliation}
  (see Lemma \ref{properties}) 
  is independent of the curve. This implies, in particular, that for any $t\in \intervalI$
  there exists a sufficiently small $r(t) >0$ such that the geodesic ball ${\mathcal B}_p (r(t))$ of radius $r(t)$ centered at $ p \in \Sigma_t$ is covered by the stereographic coordinates
  $\lambda^A$ with origin at $p$. Note that $r(t)$ is independent of $p \in \Sigma_t$. By
  Remark \ref{rem:extension} the neighbourhood $\domainl'$ contains
  ${\mathcal B}_{\oo'(t)} (r(t))$.  Since each $\SigmaUU_t$
  is connected, it is now clear that 
  we can reach any point by overlapping
  domains $\domainl$ for different curves, and therefore
  cover the whole of $\UU$.
  As a result the unit normal $\flow$ to $\Sigma_t$ satisfies all conditions
 (i), (ii), (iii)
  of Theorem \ref{Miguel}  and the result follows.
  \end{proof}

  The conclusion of the theorem, namely that $(\UU,g)$
        is locally a RW space, cannot be strengthed because our global\footnote{The term `''global'' is used with two different meanings in this paper, namely  local conditions that hold everywhere and conditions of truly global nature. The first meaning is only used in the Introduction, so there is no room for misunderstanding.} 
    assumptions on $(\UU,g)$ are too weak to conclude more.
 Indeed, one can remove any open set from $(\UU,g)$ away from the curve $\oo$ and all the hypothesis of the theorem still hold. This prevents
 $(\UU,g)$ from being globally isometric to a RW space. To get a global result we need additional hypotheses. We write down one such result based on the
 global characterization in 
 \cite{Caballero2011}, quoted in Theorem \ref{Miguel} above.

\begin{theorem}[Global characterization of RW in terms of a curve]
  \label{main1_global}
  Let $(\UU,g)$ satisfy the hypotheses of Theorem \ref{main1}
    and let $\flow$ be the vector field defined
  in its item (iii).
  Assume  that for some $t_0 \in \mathbb{R}$ there exists
    an interval $I \subset \mathbb{R}$ such that the map $\Phi: I \times
    \Sigma_{t_0}
    \to M$ defined to be the flow of the vector field
    $\flow$ is well-defined and onto.
Then $(\UU,g)$ is a RW space if and only if there exists a curve
$\oo: \intervalI \to \UU$ parametrized by $t$
such that 
the expansion $\expansion$ satisfies
  \eqref{conditions} for all $t\in\intervalI$.
\end{theorem}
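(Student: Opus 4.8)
The plan is to reduce everything to the work already done in Theorem \ref{main1} and then upgrade its conclusion by invoking the global half of Theorem \ref{Miguel}. For the ``only if'' direction I would argue exactly as in Theorem \ref{main1}: a globally RW space is in particular locally RW, so the expansion is constant on each orbit of the isometry group, and conditions \eqref{conditions} therefore hold along any transverse curve, in particular along $\oo$.

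For the ``if'' direction, I would invoke the ``if'' part of the proof of Theorem \ref{main1} verbatim. That argument establishes, under hypotheses (i)--(iii) together with \eqref{conditions}, that the geodesic character of $\flow$ forces $H=\onman{T}$ for some $T\in C^\infty(\intervalI,\mathbb{R})$, that \eqref{conditions} is equivalent to conditions RW.1 and RW.2, and hence (through Proposition \ref{Umbilicfoliation} and Lemma \ref{RWconds}) that the umbilicity function $\UUumbilic$ is homogeneous on each leaf in a neighbourhood $\domainl$ of $\oo$. Its final step---covering $\UU$ by the canonically constructed neighbourhoods $\domainl$ attached to different curves and using connectedness of each leaf $\SigmaUU_t$---then propagates this homogeneity everywhere and yields, word for word, that the unit normal $\flow$ satisfies all three conditions (i)--(iii) of Theorem \ref{Miguel} (geodesic, shear-free, and with the gradient of its umbilicity function parallel to $\flow$) on the whole of $\UU$, while the leaves carry constant curvature $\curv(t)$.

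It then remains only to supply the extra hypothesis required for the global conclusion of Theorem \ref{Miguel}. I would note that $\UU$ is connected, being the continuous image under $\Phi$ of the connected set $I\times\SigmaUU_{t_0}$ (the product of the connected interval $I$ with the connected leaf $\SigmaUU_{t_0}$). The assumption that the flow $\Phi: I\times\SigmaUU_{t_0}\to\UU$ of $\flow$ is well-defined and onto is precisely the additional hypothesis appearing in the global statement of Theorem \ref{Miguel}. Applying that statement produces a genuine isometry of $(\UU,g)$ onto a RW space, which is the asserted global characterization.

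I anticipate no real obstacle here: all the analytic content has already been discharged in Theorem \ref{main1}, and what remains is the bookkeeping needed to confirm that its conclusion furnishes every hypothesis of the global half of Theorem \ref{Miguel}. The single point deserving care is that the homogeneity of $\UUumbilic$ must hold on all of $\UU$ and not merely near $\oo$; this is exactly what the connectedness/overlapping-domains argument already embedded in the proof of Theorem \ref{main1} guarantees, which is what makes it legitimate to reuse that proof wholesale.
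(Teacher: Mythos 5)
Your proposal is correct and follows essentially the same route as the paper: the paper's proof simply notes that the ``if'' part of Theorem \ref{main1} already established conditions (i)--(iii) of Theorem \ref{Miguel} on all of $\UU$, and then invokes the global part of Theorem \ref{Miguel} using the flow-map hypothesis. Your additional remark that $\UU$ is connected (as the continuous image of $I\times\Sigma_{t_0}$ under $\Phi$) is a careful touch the paper leaves implicit, since Theorem \ref{Miguel} formally requires connectedness.
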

\begin{proof}
  In the proof of  Theorem \ref{main1} we have shown that
  $(\UU,g)$ satisfies conditions (i) to (iii) of
  Theorem \ref{Miguel}. Thus, the result follows from the global part of
  Theorem \ref{Miguel}.
\end{proof}

\begin{remark}
  The purely local conditions that we are imposing along the curve $\oo$ involve the expansion $\expansion$ of $\flow$, cf. equations \eqref{conditions}. It is of interest to translate these conditions in terms of the energy-momentum contents. So, let us assume that we are in the context
  of Theorem \ref{main1} so that $(\UU,g)$ satisfies conditions 
(i) to (iii) and, only for this remark,  that the Einstein equations hold. The momentum constraint on the
  hypersurface $\SigmaUU_t$ is
  \begin{align*}
\mbox{div}_{g_t} \left ( K_t - \expansion_t g_t \right ) = - J_t, 
  \end{align*}
  where $\mbox{div}_{g_t}$ is the divergence in $(\SigmaUU_t, g_t)$ and
  $J_t$ is the energy-flux with respect to the observer $\flow$ at
  $\SigmaUU_t$. Since
  the second fundamental form of $\SigmaUU_t$ is pure trace (cf.
  \eqref{Expansion2}), this equation becomes
  \begin{align*}
    \frac{\dim -1}{\dim} \grad_{g_t} \expansion_t= J_t.
  \end{align*}
  Thus, conditions \eqref{conditions} can be equivalently stated in terms of the
  energy flux by imposing that both $J_t$ and its divergence
  $\mbox{div}_{g_t} J_t$ vanish along the curve $\oo$.
\end{remark}

An interesting by-product of the characterization of RW spaces is the construction of an explicit coordinate system canonically
  adapted to any transverse curve and to any orthonormal space-frame defined along the curve. This frame spans, at every instant of cosmological time,  the corresponding cosmological rest space. Note that (in the Lorentzian setting) the curve is allowed to be of any causal character, so in general there is no local rest space for the curve itself.

This coordinate system could be relevant to study cosmological effects felt by a
  single observer moving arbitrarily with respect to the cosmological flow. Obviously, for such case it is necessary to restrict the transverse curve to be timelike.

\begin{theorem}[Coordinates in RW defined along an arbitrary transverse curve]
  \label{main2}
  Consider a RW space $(M = \intervalI \times \Sigma,g)$
  of dimension $\dim +1$, $\dim \geq 3$, with scale factor $a:\intervalI\to \mathbb{R}$ and curvature $\curvN$, i.e.
  
  \begin{equation*}
    g =  -\signature d\tau^2 + a^2(\tau) g_{\curvN}, \qquad \tau \in \intervalI,
  \end{equation*}
  where $g_{\curvN}$ is a metric of constant curvature $\curvN$
  on $\Sigma$.
  Let $\Sigma_t$ be the hypersurfaces $\{ \tau = t\}$.    Select
  any smooth curve $\oo: \intervalI \rightarrow M$ satisfying $\oo(t) \in
  \Sigma_t$ and an orthonormal frame $\{ X_A (t)\}$ of $T_{\oo(t)} \Sigma_t$. Then there exist coordinates $\{ \tau, z^A\}$ and functions
  $ \FF_{0A},  \FF_{AB} = - \FF_{BA} \in C^{\infty} (\intervalI, \mathbb{R})$ such that
  \begin{align}
    &g= - \signature d\tau^2+a^2(\tau)\left(1+\frac{\curvN}{4}\zsq\right)^{-2}
      \delta_{AB} \bm\omega^A\bm\omega^B, \label{corrg}  \\
    &\bm\omega^A=dz^A
      -\left\{
      \frac{\curvN}{2a(\tau)}{\FF}_{0B}(\tau)
      \left(z^Bz^A-\frac{1}{2}\zsq\delta^{AB}\right)+\frac{1}{a(\tau)}\FF_{0B}(\tau)\delta^{BA}
      +\FF_{BC}(\tau)z^B\delta^{CA}\right\}d\tau,\label{bomegaA}
  \end{align}
  where $\zsq  := \delta_{AB} z^A z^B$. In addition, the curve $\oo$ is
  given by $\{ \tau =t, z^A=0\}$ and the frame
  $\{X_A(t)\}$ is $X_A(t) = a(t)^{-1} \partial_{z^A}|_{\oo(t)}$.
\end{theorem}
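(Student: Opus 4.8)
The plan is to recognize that, with respect to the foliation by the level sets $\SigmaUU_t=\{\tau=t\}$, a RW space is a particular instance of the RW foliation classified in Proposition \ref{res:RWfoliation}, and that the only datum left free there, the lapse $\GamVV$, is forced by the special warped-product form of the metric to be the constant $-\signature$. Thus the bulk of the work is to verify the hypotheses of the foliation machinery and then to pin down $\GamVV$.

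First I would check that $(\UU,\intervalI,\tau,g)$ is an umbilic foliation with constant curvature leaves along $\oo$. The induced metric on $\SigmaUU_t$ is $g_t=a(t)^2 g_{\curvN}$, which has constant curvature $\curv(t)=\curvN/a(t)^2$, giving condition 2 of Definition \ref{Def:Umbilic}. The normal field of the foliation, determined by $V(\tau)=1$ and $g$-orthogonality to the leaves, is just $V=\partial_\tau$ in the given RW coordinates, and a direct computation gives $\pounds_{\partial_\tau}\bigl(a^2 g_{\curvN}\bigr)=\tfrac{2\dot a}{a}\,a^2 g_{\curvN}$, so the foliation is umbilic with umbilicity function $\UUumbilic=2\dot a/a$. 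Since $\UUumbilic$ depends only on $\tau$ it is homogeneous; hence its spatial gradient and Laplacian vanish identically, and in particular along any transverse curve $\oo$, so conditions RW.1 and RW.2 of Subsection \ref{RWfoliation} hold. Moreover $g(V,V)=g(\partial_\tau,\partial_\tau)=-\signature$, so $V$ is already unit, equals $\flow$, and is geodesic in the warped product; equivalently the function $H$ of \eqref{GamVV2} is identically $1$.

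With these facts I would invoke Proposition \ref{res:RWfoliation} (whose setup, that of Proposition \ref{Umbilicfoliation}, is met since $\dim\geq 3$ and we are given $\oo$ and the orthonormal frame $\{X_A(t)\}$) applied to $\GammaUU=g$. It produces coordinates $\{\tau,z^A\}$ on a neighbourhood $\domainl$ of $\oo$, a nowhere-zero scale function, a curvature constant, and antisymmetric functions $\FF_{0A},\FF_{AB}=-\FF_{BA}$ such that $g$ takes the form \eqref{eq:gamma_RW_a}, with $\oo=\{\tau=t,\,z^A=0\}$ and $X_A(t)=a(t)^{-1}\partial_{z^A}|_{\oo(t)}$. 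The scale function there solves \eqref{ODE:s}, i.e. $\dot a=\tfrac12 a\UURW$ with $\UURW=2\dot a/a$, so it coincides with the given RW scale factor up to a multiplicative constant, which I fix to $1$; correspondingly the curvature constant produced by the proposition equals the RW curvature $\curvN$. All the claims about $\oo$ and the frame are then inherited verbatim, and since $\oo$ is defined on all of $\intervalI$ the coordinates are valid along the whole curve.

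The final, and only genuinely new, step is to identify the lapse $\GamVV$ in \eqref{eq:gamma_RW_a}. Since $\GamVV=g(V,V)$ is a coordinate-independent scalar and was computed above to equal $-\signature$, the coefficient of $d\tau^2$ is the constant $-\signature$, which turns \eqref{eq:gamma_RW_a} into exactly \eqref{corrg}--\eqref{bomegaA}. I expect this identification to be the only point needing care: it is precisely where the full RW structure is used rather than the weaker notion of a RW foliation, through the fact that the foliation normal $\partial_\tau$ is already unit (so $H\equiv 1$ and $\GamVV$ is constant). Everything else is a direct appeal to the results already established in Section \ref{sec:geo_H}.
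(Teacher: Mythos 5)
Your proof is correct, and its skeleton coincides with the paper's: both verify that the warped-product structure makes $(M,\intervalI,\tau,g)$ a RW foliation (umbilic, constant-curvature leaves, umbilicity function $2\dot a/a$ homogeneous, hence conditions RW.1--RW.2 hold along any transverse curve), both then invoke Proposition \ref{res:RWfoliation} to produce the coordinates $\{\tau,z^A\}$, the functions $\FF_{0A},\FF_{AB}$, the identifications $a$, $\curv=\curvN a^{-2}$, and the claims about $\oo$ and $X_A(t)$, and both are left with the single task of fixing the lapse $\GamVV$ in \eqref{eq:gamma_RW_a}. Where you genuinely differ is in that last step. The paper computes the second fundamental form $K_t=(\dot a/a)\,g_t$, reads off the expansion $\expansion_t=\dim\,\dot a(t)/a(t)$, and applies Proposition \ref{res:non_geodesic}: its explicit formula \eqref{expressionA} collapses to $-\signature$ once $C_A=0$, $C_0=\dot a/a$ and $\onman{C}_0\onman{\curv}+\tfrac12\onman{\dot\curv}=0$ are inserted. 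You instead use the identity $\GamVV=g(V,V)$ (established at the start of Section \ref{NewCharac}, and still valid for \eqref{eq:gamma_RW_a} since $\bm\omega^A(V)=0$ and $d\tau(V)=1$) together with the coordinate-independence of this scalar: in the original warped-product chart the foliation normal is $V=\partial_\tau$ with $g(\partial_\tau,\partial_\tau)=-\signature$, so $\GamVV=-\signature$ with no appeal to Proposition \ref{res:non_geodesic}. Your route is slightly more economical, bypassing the expansion computation and formula \eqref{expressionA} entirely; the paper's route instead reuses the homogeneous-expansion machinery that also underpins its characterization theorems. Both arguments are complete, and your identification of the lapse determination as ``the only genuinely new step'' matches the structure of the paper's own proof exactly.
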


\begin{proof} The set
  $(M,\intervalI,\tau,g)$ is clearly a RW foliation and the second fundamental form of the constant curvature leaves $\Sigma_t =\{ \tau = t\}$, $t\in \intervalI$, satisfies
    \begin{align}\label{Kt_final}
      K_{t} = \frac{\dot{a}(t)}{a(t)}g_t.
    \end{align}
  Proposition \ref{res:RWfoliation}
  provides the form of
  $g=h$ and $\bm{\omega}^A$, leaving the function $\GamVV$ to be determined.
  The proof of this proposition also gives  $C_A=0$, $C_0 = \dot{a}/a$ and $\curv = \curvN a^{-2}$.
  Combining \eqref{Kt_final} with \eqref{Expansion} gives  $\expansion_t = \dim \dot{a}(t)/a(t)$.
We can now apply expression \eqref{expressionA} in Proposition \ref{res:non_geodesic} to conclude $\GamVV = -
   \signature$ and the theorem is proved.
\end{proof}

\begin{remark}\label{main2:u}
Since $\GamVV = - \epsilon$ the foliation function $\tau$ is such that $d\tau$ is unit. Thus $H=1$ as a consequence of \eqref{udtau} and hence $V = \flow$ by \eqref{Vflow}. Consequently the flow vector $\flow$ in the coordinates of the theorem takes the form \eqref{eq:V_in_z}
  \begin{equation}\label{eq:u_e}
    \flow= \partial_{\tau} + 
    \left\{
      \frac{\curvN}{2a(\tau)}\FF_{0A}(\tau)\left(z^Az^B-\frac{1}{2}\zsq \delta^{AB} \right)
      +\frac{1}{a(\tau)}\FF_{0A}(\tau)\delta^{AB}+\FF_{AC}(\tau)z^A\delta^{CB}\right\}
    \partial_{z^B}. 
  \end{equation}
\end{remark}

\begin{remark}\label{velocity}
    In the Lorentzian case $\signature=1$,
    the curve $\oo(t)$ is timelike iff $\FF_{0A}(t)\FF_{0B}(t)
  \delta^{AB}<1$, and then it describes the single observer that moves
    with three-velocity
    \[
      v(t)=-a^{-1}(t)F_{0B}(t)\delta^{AB}\partial_{z^B}
    \]
    with respect to the cosmological flow $\flow$.
    On the other hand, as expected, the functions $F_{AB}$ relate to the
    rotation of the frame $\{X^A(t)\}$ along the curve
    with respect to the cosmological
    flow.
    We devote Subsection \ref{Sub:Killings} below
    to prove these statements and provide the explicit relations,
    in particular.
  \end{remark}

\begin{remark}
  Theorem \ref{main2} requires $\dim \geq 3$ because the core of our
  argument relies on using a finite dimensional basis of conformal Killing vectors, and this requires that the spaces have at least dimension three. However, one checks that
  when $\dim =1,2$ the metric \eqref{corrg}-\eqref{bomegaA} is still locally isometric to a RW space.
  This can be done in several different ways. A simple one is  to note  that $g$ admits $\dim (\dim+1)/2$ linearly independent Killing vector fields tangent to spacelike hypersurfaces. The explicit form of these Killing vectors is obtained in
  Subsection \ref{Sub:Killings} below.
\end{remark}

\subsection{Description of the new coordinates}
\label{Description}
In this subsection we present a description of the geometric meaning of the
free smooth functions $ \FF_{0A}(t)$ and $\FF_{AB}(t)$. Before doing so, however, it is worth to comment on the method that we have followed to determine the RW metric in the new coordinates.

In Section \ref{sec:geo_H} we have applied the construction in
  Appendix \ref{sec:constant_curv_spaces} to build, in the whole foliated
manifold, stereographic coordinates on each constant curvature leaf
centered on an arbitrary curve $\oo(t)$ crossing the foliation. The
curvature on each leaf $\curvS(t)$ was allowed to change sign. Since
the transformation that sends stereographic coordinates centered at
one point to stereographic coordinates centered at another point in a
space of constant curvarture $\curvS$ is an isometry, one could think
of approaching the problem with a direct coordinate change, namely a
transformation that sends leaves to leaves and that, restricted to
each leaf, defines an isometry (with parameters depending on
$t$). However, this approach has some disadvantages. First, the isometries
take different forms depending on whether $\curvS$ is zero or
non-zero. In the former case the isometries are just the standard
Euclidean motions  and in the second one they are
a subset of the M\"obius transformations \cite{moebius},
so both cases cannot be treated in one go.
Working on a sufficient local patch, the two cases can be treated
simultaneously by means of rotations and so-called
``quasitranslations'' (see e.g. \cite{Weinberg}).  However, such
coordinates do not cover the whole space when $\curvS$ is positive
(this is related to the last comment in Remark \ref{embedding} in
Appendix\ref{sec:constant_curv_spaces}).
So these methods cannot be applied to cases when the curvature is
allowed to change sign and one wants to cover the whole manifold
(up to the antipodal point, obviously).
Even restricting oneself to foliations with everywhere positive,
zero or negative curvature, the direct coordinate transformation
method turns out to be
remarkably more complicated that the approach that we have
followed here. Without knowing beforehand the final form of the tensor
\eqref{eq:gamma_RW_a} (or of the metric \eqref{corrg} in the RW case)
it would have been quite difficult to show that the transformation that
maps leaves into leaves and
sends the origin into a moving point $\oo(\tau)$ takes the final simple
form that we find. In any case, in Remark \ref{rem:curve_F} below we provide
the relation of the parametrization of the curve $\oo(\tau)$
expressed in standard RW comoving coordinates with the coefficients
$F_{0A}$ and $F_{AB}$ in the simplest possible case
(corresponding to a certain rotation of the coordinates being set to
the identity).

We now proceed with the description of $\FF_{0A}$ in terms of the curve $\oo$ and the orthonormal frame $\{X_A(t)\}$  selected along the curve.
First, observe that the character (timelike, spacelike or null)
of the curve $\oo$ is completely arbitrary and may even depend on the point (of course  this only
makes sense for $\signature=1$).
Since the vector tangent to $\oo$ is $\W|_{\oo(t)}$,
  and in the coordinates  $\{\tau,z^A\}$ of Theorem \ref{main2} one has
  $\W|_{\oo(t)} =\partial_\tau|_{\oo(t)}$
and $z^A=0$ at $\oo$, it is straightforward to compute
\[ g(\partial_\tau,\partial_\tau)|_{\oo(t)}=
  -\signature+\FF_{0A}(t)\FF_{0B}(t) 
  \delta^{AB}
  =:N(t).
\]
Therefore, in the case $\signature =1$ the curve $\oo$ is timelike at points where
$\sum_A (\FF_{0A}(t))^2<1$. On the other hand, since $X_A(t) = a(t)^{-1} \partial_{z^A}|_{\oo(t)}$,
we have
\[
  \FF_{0A}(t) =-g(X_A,\partial_\tau)|_{\oo(t)}.
\]
Thus, $\FF_{0A}(t)$ provide a measure of the angle
between the leaves $\SigmaUU_t$ and the curve $\oo$ at time $t$, and thus,
how the projection of the curve $\oo$ on the leaves moves
with $t$. Explicitly, the three-velocity of the observer following
  the curve $\oo$ with respect to the cosmological flow,
  defined as the standard of
  the projection of $\partial_\tau$ orthogonal
  to $\flow$ at $\oo$, is thus explicitly given by
    \begin{align*}
      v=\frac{1}{- g ( \partial_\tau |_\oo, U|_{\oo})}
      \left ( \partial_\tau|_{\oo}+ g(\partial_\tau|_\oo,\flow|_\oo) \flow|_{\oo}
 \right )     =-a^{-1}F_{0B}\delta^{AB}\partial_{z^A},
\end{align*}
      as stated in Remark \ref{velocity}.

Assume now that $\oo$ is nowhere null.
Recall that the Fermi-Walker derivative of a vector field $X(t)$ along a curve $\gamma(t)$ with tangent vector $\dot{\gamma}$ is defined
  as
  \[
    D^{FW}_{\gamma} X := \nabla_{\dot{\gamma}} X +\frac{1}{g(\dot\gamma,\dot\gamma)} \left(g(X, \nabla_{\dot{\gamma}} \dot{\gamma}) \dot{\gamma} - g (X, \dot{\gamma}) \nabla_{\dot{\gamma}} \dot{\gamma}\right),
  \]
  and that it describes the rate of rotation of the field $X(t)$ along the curve.  It is therefore of interest to compute  $D^{FW}_\oo X$
along the curve $\oo$. The computation is aided by the fact that the tangent vector of $\oo$ is the restriction to $\oo$ of the vector field 
$\W$.
We start with the Fermi-Walter derivative of the flow vector $\flow$.
A straightforward calculation yields 
\begin{equation}
  Z:=D^{FW}_\oo \flow=
  -\frac{\signature}{a N}\left(\frac{d}{dt}\FF_{0}{}^A-\FF_{0B} \FF^{BA}\right)\partial_{z^A},\label{eq:Z} 
\end{equation}
where here and in the following we use $\delta^{AB}$ to raise the indexes $A,B,\ldots$. The Fermi-Walker derivative of the frame vectors
$X_A(t)$ is most easily expressed in terms of the functions
 $Z(z^A)$, namely the $A$-components of the vector $Z$ in the basis $\{ \partial_{\tau}, \partial_{z^A}\}$. The result is
\begin{equation}
  D^{FW}_\oo X_A=\left(\signature \FF_{0A} Z(z^B)
    -\frac{1}{a}\FF_{A}{}^B\right)\partial_{z^B}
    +\signature a Z(z^A) \partial_{\tau}. \label{eq:DXA} 
    \end{equation}
Now, if the frame $\{X_A(t)\}$
is transported Fermi-Walker along $\oo$, i.e. $D^{FW}_\oo X_A=0$, then the $\partial_{\tau}$ component of  \eqref{eq:DXA}
yields  $Z(z^A)=0$ and then its $\partial_{z^A}$ component gives $\FF_{AB}=0$, which by the explicit expression of $Z(z^A)$ (cf. \eqref{eq:Z}) also imply that $ \FF_{0A}(t)$ are constant.
Conversely, if $ \FF_{0A}(t)$ are constant and $\FF_{AB}=0$
then  $Z(z^A)=0$ and thus $D^{FW}_\oo X_A=0$ for all $A$.

Observe that demanding that the frame $\{X_A(t)\}$ is
transported Fermi-Walker along $\oo$ restricts
the curve itself to be at a constant angle $ \FF_{0A}(t)$.
This may seem strange at first sight because one could think that the freedom in choosing the frame should allow for any type of Fermi-Walker transport along any curve. However, this is not the case because by construction the frame
$\{X_A(t)\}$  has already been restricted to be tangent to the leaves, which prevents one from choosing an arbitrary Fermi-Walker transport law along
 $\oo$.

It is also of interest to study the consequences of imposing that the flow vector $\flow$ be
transported  Fermi-Walker along $\oo$,
i.e. $Z=0$. Again, this restricts the form of $\oo$. Inserting $Z=0$  in  \eqref{eq:Z}
  gives
\[
 \frac{d}{dt}(\FF_{0A})=\FF_{0}{}^B \FF _{BA}. 
\]
In any case,
since the Fermi-Walker transport of a frame
is associated to its rotation along the curve,
we have thus clear indication that $\FF_{AB}$ is related to  rotation.
We confirm this by showing
  that we can get rid of $\FF_{AB}(t)$
  by means of a rotation of the frame $\{X^A(t)\}$ along {\it any} curve $\oo(t)$.

A rotation of the frame is
a transformation of the form $\newX_A(t)=R_A{}^B(t) X_B(t)$
along $\oo(t)$, where $R_A{}^B(t)$ are a set of $\dim^2$ functions on $\intervalI$
satisfying $R_A{}^C(t) R_B{}^D(t)\delta_{CD}=\delta_{AB}$.
Let us first show
how a rotation translates to a change of the corresponding stereographic coordinates.
Since the coordinates $z'^A$ are adapted to $X'_A(t)$, the transformation above
is equivalent to 
$\partial_{z'^A}|_{\oo(t)}=R_A{}^B(t) \partial_{z^B}|_{\oo(t)}.$
By uniqueness of the stereographic coordinates, see Remark \ref{res:stereo_cords}, the change of coordinates must satisfy
$$\partial_{z'^A}|_{\SigmaUU_t}=R_A{}^B(t) \partial_{z^B}|_{\SigmaUU_t},$$
everywhere on each leaf $\SigmaUU_t$. Hence,
$$
\frac{\partial z^B}{\partial z'^A}=\onman{R}_A{}^B,
$$
and, given that  $\oo$ is located at $z'^A=0$,
the coordinate change on $M$  is given by 
\begin{equation}
  z^B=\onman{R}_A{}^Bz'{}^A \qquad \Longleftrightarrow \qquad
z'^A = \onman{R}^A{}_B z^B(\equiv \delta^{AC}\delta_{BD}\onman{R}_C{}^D z^B).
\label{eq:trans_z}
\end{equation}
In short, a time dependent rotation of the frame $\{X^A(t)\}$
along $\oo(t)$ 
produces a uniform rotation of the coordinates on each leaf $\SigmaUU_t$.
The following result shows how the rotations $\onman{R}_A{}^B$ are related
to $\onman{\FF}_{AB}$.

\begin{lemma}\label{res:rotations_and_F}
  Given the RW space of Theorem \ref{main2} and
  its flow $\flow$ from Remark \ref{main2:u},
  we have
  \begin{equation}
    \label{eq:lie_uX}
    [\flow,\partial_{z^A}]= -\onman{\FF}_{A}{}^{B}\partial_{z^B}.
  \end{equation}
  There exist stereographic coordinates $\{z'^A\}$
    associated to a  new orthonormal
    frame $\{\newX_A(t)\}$ along the curve $\oo(t)$
  for which
  \begin{equation}
    \label{eq:comm_VX}
    [\flow,\partial_{z'^A}]=0.
  \end{equation}
  The frames $\{\newX_A(t)\}$ and $\{ X_A(t)\}$ are related by
    a rotation $\newX_A(t)=R_A{}^B(t)X_B(t)$, which, in matrix form, takes the explicit form
      $R(t)= R(t_0) \exp(-\int^t_{t_0}\FF(s)ds)$,
      where $R(t_0)$ is an arbitrary element
        $R(t_0)\in O(n)$.
 \end{lemma}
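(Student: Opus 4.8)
The plan is to derive all three assertions from a single ingredient: the explicit Lie bracket $[\flow,\partial_{z^A}]$ computed from the form of $\flow$ in Remark \ref{main2:u}. Once \eqref{eq:lie_uX} is in hand, the rotation statement follows by translating a frame rotation $\newX_A=R_A{}^BX_B$ into the rigid coordinate rotation \eqref{eq:trans_z} and solving a linear matrix ODE; orthogonality of $R$ will come for free from the antisymmetry of $\onman\FF_{AB}$.

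First I would establish \eqref{eq:lie_uX}. Writing $\flow=\partial_\tau+W^B\partial_{z^B}$ with $W^B$ the bracketed coefficient of \eqref{eq:u_e}, one has $[\flow,\partial_{z^A}]=-(\partial_{z^A}W^B)\partial_{z^B}$, so only the $z$-dependence of $W^B$ matters. It is cleanest to split $\flow$ via \eqref{eq:V=}, where in the present RW situation $\onman C_A=0$ and $\onman C_0=\onman{\dot a}/\onman a$: the dilation contributions cancel (the $\onman C_0\,\eta^{0\,\dimu}$ term against the $-(\onman{\dot a}/\onman a)z^A\partial_{z^A}$ piece of $\W$ in \eqref{eq:W_in_z}), leaving $\flow=\partial_\tau+\onman\FF_{0B}\,\eta^{0B}+\tfrac12\onman\FF_{BC}\,\eta^{BC}$ in the coordinates $\{\tau,z^A\}$. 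Bracketing against $\partial_{z^A}$, the rotation term contributes exactly $-\onman\FF_A{}^B\partial_{z^B}$ (this uses only the linearity of $\eta^{BC}$ in $z$ together with $\onman\FF_{BC}=-\onman\FF_{CB}$), while the term $\onman\FF_{0B}\,[\eta^{0B},\partial_{z^A}]$ comes from the special-conformal part of $\eta^{0B}$ and is linear in $z^A$ with no constant piece, hence vanishes on the curve $\oo$, where $z^A=0$. Thus \eqref{eq:lie_uX} holds along $\oo$, which is exactly where the frame $\{X_A(t)\}$ lives and all that the rotation argument needs.

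Next I would prove \eqref{eq:comm_VX} together with the explicit rotation. Under $\newX_A=R_A{}^BX_B$ the coordinates rotate rigidly as in \eqref{eq:trans_z}, so $\partial_{z'^A}=\onman R_A{}^B\partial_{z^B}$ on each leaf. Since $\onman R_A{}^B=R_A{}^B\circ\tau$ and $\flow(\tau)=1$, one has $\flow(\onman R_A{}^B)=\dot R_A{}^B$, and combining this with \eqref{eq:lie_uX} gives, along $\oo$,
\begin{equation*}
[\flow,\partial_{z'^A}]\big|_{\oo}=\big(\dot R_A{}^C-R_A{}^B\onman\FF_B{}^C\big)\partial_{z^C}\big|_{\oo}.
\end{equation*}
Demanding \eqref{eq:comm_VX}, equivalently that the transformed antisymmetric functions $\onman\FF'_{AB}$ vanish, is therefore the linear first-order matrix ODE $\dot R=R\FF$ (with the precise sign fixed by the index convention for $\onman\FF_A{}^B$). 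Its fundamental solution with arbitrary prescribed $R(t_0)\in O(\dim)$ is the matrix exponential quoted in the statement; and orthogonality is preserved for all $t$ because $\FF$ is antisymmetric, $\tfrac{d}{dt}(RR^T)=R(\FF+\FF^T)R^T=0$, so $RR^T\equiv R(t_0)R(t_0)^T=\mathrm{Id}$. Finally, since a spatial rotation cannot alter the angle functions $\onman\FF_{0A}$ (fixed by the curve through $\FF_{0A}(t)=-g(X_A,\partial_\tau)|_{\oo(t)}$), eliminating $\onman\FF_{AB}$ is the most a rotation can achieve, and it is precisely what kills the bracket on $\oo$.

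The main obstacle is the careful bookkeeping of the conformal-Killing contribution $\onman\FF_{0A}\eta^{0A}$ to the bracket: one must verify that its special-conformal (quadratic-in-$z$) part produces no constant term and hence drops out on $\oo$, so that the only surviving structure there is governed by the rotation functions $\onman\FF_{AB}$. This is what collapses the problem to the clean system $\dot R=R\FF$. The remaining points are conventional rather than substantive: matching the overall sign of the exponent, and interpreting the matrix exponential as a path-ordered fundamental solution once $\dim\ge4$ makes $\mathfrak{so}(\dim)$ nonabelian.
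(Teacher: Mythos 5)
Your proposal follows the same route as the paper's proof: compute the bracket of $\flow$ with $\partial_{z^A}$ (you via the decomposition \eqref{eq:V=}, the paper directly from \eqref{eq:u_e}), turn the frame rotation into the rigid coordinate rotation $\partial_{z'^A}=\onman{R}_A{}^B\partial_{z^B}$, and solve a linear matrix ODE, with orthogonality of $R$ preserved by the antisymmetry of $\FF_{AB}$. The substantive differences are in the execution, and on the two main points your version is the more careful one. First, the paper's proof claims that a ``direct computation'' gives $[\flow,\partial_{z^A}](z^B)=-\onman{\FF}_A{}^B$ identically, whereas you correctly detect the special-conformal remainder: from \eqref{eq:u_e},
\begin{equation*}
  [\flow,\partial_{z^A}]=-\onman{\FF}_A{}^B\partial_{z^B}
  -\frac{\curvN}{2\onman{a}}\,\onman{\FF}_{0C}\left(\delta_A^C z^B+z^C\delta_A^B-\delta_{AD}z^D\delta^{CB}\right)\partial_{z^B},
\end{equation*}
and the second term is linear in $z$, hence vanishes on $\oo$ but not elsewhere unless $\curvN\onman{\FF}_{0A}\equiv 0$ (take $\curvN=1$, $\FF_{AB}=0$, $\FF_{0A}\neq0$ to see \eqref{eq:lie_uX} fail off the curve). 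So \eqref{eq:lie_uX} and \eqref{eq:comm_VX} can only hold along $\oo$; your restricted reading is forced, it is all the rotation argument needs, and the paper's global claim (and the computation in its proof) is erroneous on this point.

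Second, the signs. You derive $\dot R=R\FF$, and that is indeed the equation that kills the bracket: along $\oo$, the Leibniz expansion gives $[\flow,\partial_{z'^A}]=(\dot R_A{}^C-R_A{}^B\FF_B{}^C)\partial_{z^C}$. The paper instead imposes \eqref{eq:R}, $\dot R=-R\FF$, which substituted into the same expansion yields $[\flow,\partial_{z'^A}]=-2(R\FF)_A{}^C\partial_{z^C}\neq0$, so the paper's proof is internally inconsistent here and your sign is the correct one. The flip side is that your ODE integrates (when the matrices $\FF(s)$ at different times commute) to $R(t)=R(t_0)\exp\bigl(+\int_{t_0}^t\FF(s)\,ds\bigr)$, i.e.\ the opposite sign to the formula quoted in the Lemma, which solves the paper's ODE rather than yours; your appeal to ``index conventions'' does not resolve this within the paper's fixed conventions, and you should instead state plainly that the exponent's sign in the statement is tied to the inconsistent ODE \eqref{eq:R}. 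Finally, your path-ordering caveat is right in substance but wrongly placed at $\dim\geq4$: $\mathfrak{so}(3)$ is already nonabelian, so the closed-form exponential (with either sign) requires $[\FF(s),\FF(s')]=0$ or a time-ordered exponential as soon as $\dim\geq 3$, i.e.\ already in the minimal dimension the paper allows; this caveat afflicts the Lemma's stated formula as well.
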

\begin{proof}
  Direct computation using \eqref{eq:u_e} yields
  \[
    [\flow,\partial_{z^A}](\tau)=0, \quad
    [\flow,\partial_{z^A}](z^B)=-\onman{\FF}_{A}{}^{B},
  \]
  and so \eqref{eq:lie_uX} follows.
  Now, consider a set of $\dim^2$ functions
  $R_A{}^B\in C^{\infty}(\intervalI,\mathbb{R})$
  that solve the ODE $\dot R_A{}^B(t))=-R_A{}^C(t) \FF_{C}{}^B(t)$ for all $t\in\intervalI$,
  and thus
  \begin{equation}\label{eq:R}
  \flow(\onman{R}_A{}^B)=-\onman{R}_A{}^C\onman{\FF}_C{}^B.
  \end{equation}
  This is a linear system of ODE, so unique solutions exist globally given initial data $R_A{}^B(t_0)$. It is a standard fact that $R_A{}^B(t)$
    are  rotations for all $t$ provided the initial data is a rotation. 
    This follows from the fact that
    $I_{AB}\defi R_A{}^C(t) R_B{}^D(t)\delta_{CD}$ is initially $\delta_{AB}$ and
    \begin{align*}
\dot{I}_{AB}  
      =&-\delta_{CD}\left(R_A{}^E\FF_{EF}\delta^{FC}R_B{}^D
         +R_B{}^E\FF_{EF}\delta^{FD}R_A{}^C\right)\\
      =&-\left(R_A{}^E\FF_{EF}R_B{}^F
         +R_B{}^E\FF_{EF}R_A{}^F\right)=-R_A{}^ER_B{}^F\left(\FF_{EF}+\FF_{FE}\right)=0.
    \end{align*}

  It only remains to define the vector fields
  $\newX_A(t)=R_A{}^B(t) X_B(t)$ along $\oo(t)$.
By construction  $\{\newX_A(t)\}$
 is  an orthonormal frame at each $t\in\intervalI$. 
  As shown above, the corresponding stereographic coordinates satisfy
  \begin{equation}\label{eq:trans_partial_z}
    \partial_{z'^A}=\onman{R}_A{}^B \partial_{z^B}.
  \end{equation}
  Equation \eqref{eq:comm_VX} follows at once from  this together with
    \eqref{eq:lie_uX}
  and \eqref{eq:R}.
\end{proof}

The first result of the lemma tells us first
how to characterize $\FF_{AB}$ in terms of the flow vector $\flow$
and the natural basis $\partial_{z^A}$, explicitly,
\[
  \FF_{A}{}^B(t)=
  - [\flow,\partial_{z^A}](z^B).
\]
The second result ensures that rotating conveniently 
the frame along $\oo(t)$ 
we can get rid of 
$\FF_{AB}(t)$ in the corresponding new stereographic coordinates.

Finally, let us show how the vector field $\W$ transforms under the transformation
$\newX_A(t)=R_A{}^B(t) X_B(t)$ along $\oo(t)$. 
Recall that $\W$ is the vector field tangent to the curves of
  constant values of $\{\lambda^A\}$ and parametrized by $\tau$. In the
coordinates
$\{\tau,z^A\}$, $\W$ is given by \eqref{eq:W_in_z}.
We already
know that the vector fields $\partial_{z^A}$ transform as \eqref{eq:trans_partial_z},
so it is convenient to write \eqref{eq:W_in_z} as
\[
  \W=\flow-\flow(z^A)\partial_{z^A}-\frac{\onman{\dot{a}}}{\onman{a}}z^A\partial_{z^A}.
\]
Accordingly, the vector $W_{\oo,\newX}$ associated to the frame $\{\newX_A(t)\}$
along the curve $\oo(t)$ reads as
\[
  W_{\oo,\newX} =\flow-\flow(z'^A)\partial_{z'^A}-\frac{\onman{\dot{a}}}{\onman{a}}z'^A\partial_{z'^A}.
\]
Therefore, using \eqref{eq:trans_z}, \eqref{eq:R} and \eqref{eq:trans_partial_z} we have
\begin{align*}
  W_{\oo,\newX} =
  &\flow-\flow(\onman{R}^A{}_C z^C)\onman{R}_A{}^B\partial_{z^B}
    -\frac{\onman{\dot{a}}}{\onman{a}}\onman{R}^A{}_B z^B\onman{R}_A{}^C\partial_{z^C}\\
  =&\flow-\delta^{AE}\delta_{CD}\flow(\onman{R}_E{}^D)z^C\onman{R}_A{}^B\partial_{z^B}
     -\onman{R}^A{}_C \flow(z^C)\onman{R}_A{}^B\partial_{z^B}
     -\frac{\onman{\dot{a}}}{\onman{a}}z^C\partial_{z^C}\\
  =&\flow+\delta^{AE}\delta_{CD}\onman{R}_E{}^F\onman{\FF}_F{}^D z^C\onman{R}_A{}^B\partial_{z^B}
     -\flow(z^B)\partial_{z^B}
     -\frac{\onman{\dot{a}}}{\onman{a}}z^C\partial_{z^C}\\
  =&\flow+\delta_{CD}\delta^{FB}\onman{\FF}_F{}^D z^C\partial_{z^B}-\flow(z^B)\partial_{z^B}
     -\frac{\onman{\dot{a}}}{\onman{a}}z^C\partial_{z^C}\\
  =&\flow+\onman{\FF}^B{}_C z^C\partial_{z^B}-\flow(z^B)\partial_{z^B}
     -\frac{\onman{\dot{a}}}{\onman{a}}z^C\partial_{z^C},
\end{align*}
and thus
\[
  W_{\oo,\newX} =\W+\onman{\FF}^B{}_C z^C\partial_{z^B}.
\]

\begin{remark}\label{rem:curve_F}
    If we write the RW metric in standard coordinates in the form
    \begin{align}
      \label{RWmet}
    g=- \signature d\tau^2+a^2(\tau)\left(1+\frac{\curvN}{4}|x|^2\right)^{-2}
      \delta_{AB} dx^A dx^B.
    \end{align}
    then the curve $\mathfrak{o}(t)$ can be described by a set of functions
    $\oo(t):=\{x^A=\mathfrak{o}^A(t)\}$. If we consider the simplest coordinate transformation that sends this stereographic  coordinates $\{x^A\}$ to
    stereographic coordinates $\{ z^A\}$ centered at $\mathfrak{o}(t)$, one can show that the metric \eqref{RWmet} gets transformed into the metric
    of Theorem \ref{main2} with the following expressions for
    the coefficients
  \begin{align*}
          &\frac{F_0{}^A}{a}=-\frac{1}{1+\frac{\curvN}{4}\oo^2}\,\dot \oo^A,\\
  &F^{BA}=\frac{\curvN}{1+\frac{\curvN}{4}\oo^2}\frac{1}{2}\left( \oo^B \dot\oo^A-\oo^A\dot \oo^B\right).
  \end{align*}
\end{remark}

\subsection{Explicit form of the Killings in the new coordinates}
\label{Sub:Killings}
For completeness, we also provide the expressions of the $\dim(\dim+1)/2$ Killing vector
fields in the chart $\{\tau,z^A\}$. There are several ways to approach the problem. A very direct one is to consider the following 
vector field anzatz
\begin{equation*}
  \xi= 
  \left\{\onman{\kill}_{0A}
    \frac{\curvN}{2}\left(z^Az^B-\frac{1}{2}\zsq \delta^{AB} \right)+\onman{\kill}_{0A}\delta^{AB}+\onman{\kill}_{AC}z^A\delta^{CB}\right\}
  \partial_{z^B}, 
\end{equation*}
where $\kill_{0A}(t)$,  $\kill_{AB}(t)=-\kill_{BA}(t)$ are  $\dim(\dim+1)/2$ free functions of $t$. 
A direct computation shows that $\xi$ is a Killing vector field of $g$
if and only if the system of differential equations 
\begin{align*}
  &\dot\kill_{0A}=
    \frac{1}{a}\FF_0{}^{C}\kill_{AC}
    -\FF_A{}^{C}\kill_{0C},\\ 
  &\dot\kill_{AB}=\frac{\curvN}{a}\left(\FF_{0A}\kill_{0B}-\FF_{0B}\kill_{0A}\right)
    +\FF_A{}^{C}\kill_{BC}
    -\FF_B{}^{C}\kill_{AC}
\end{align*}
is satisfied. The general solution depends on $\dim(\dim+1)/2$ free parameters, which matches the number of linearly independent Killing vectors of the metric. 
A basis of the  $\dim(\dim+1)/2$ dimensional algebra can be computed by choosing, for instance,
initial conditions given by the $\dim$ relations
\begin{equation}
  \{\kill_{0A}(t_0)=\delta_{AC},\quad  \kill_{AB}(t_0)=0\}\quad \mbox{ for } C=1,\ldots,\dim
\end{equation}
plus the $\dim(\dim-1)/2$ conditions
\begin{equation}
  \{\kill_{0A}(t_0)=0,\quad  \kill_{AB}(t_0)=\varepsilon_{ABC_1...C_{n-2}}\}\quad \mbox{ for } C_1,...,C_{n-2}=1,\ldots,\dim,
\end{equation}
where we have used $\varepsilon_{ABC_1...C_{n-2}}$ for the $\dim$-dim
antisymmetric symbol. It is immediate to check that the $\dim(\dim+1)/2$ vector fields thus constructed are linearly independent, and hence a basis of Killing vectors.

\section*{Acknowledgements}
We thank Jose J. Blanco-Pillado and Jon Urrestilla for useful comments
and a couple of references.
M.M. acknowledges financial support under
Grant PID2021-122938NB-I00 funded by MCIN/AEI /10.13039/501100011033 and
by “ERDF A way of making Europe” and RED2022-134301-T funded by
MCIN/AEI/10.13039/501100011033.
R.V. was supported by Grants IT956-16 and IT1628-22 from the Basque Government,
and PID2021-123226NB-I00 funded
by  “ERDF A way of making Europe” and MCIN/AEI/10.13039/501100011033.
Some algebraic computations have been performed with the help of the free PSL version of REDUCE.
\appendix

\section{Constant curvature spaces: charts and conformal Kil\-ling fields}
\label{sec:constant_curv_spaces}

In this appendix we find a canonical form for the metric
in a semi-Riemannian space of constant
curvature $(S,\g)$ of dimension $\dim$. Given 
any point $o \in S$ and
a basis  $\{X_A\}$ of $T_o S$ we construct canonically a set of $\dim$ functions $x^A$ on a neighbourhood $U_o$ of $o$ such that the metric
is given by \eqref{confflat}, where $\curvS$ is the curvature of $(S,\g)$ and the constants $h_{AB}$ are defined
by $h_{AB} := \g|_o (X_A, X_B)$. The result is obviously known. However, we include its derivation for two reasons.
First, the argument is self-contained, elegant and short (and, to the best of our knowledge, also new). Second, because the construction being
explicit and based on solutions of certain system of overdetermined system of PDE, it allows us to show in the main text how to construct a coordinate system in the foliated manifold.

The construction is based on the following definition.
\begin{definition}\label{def:hess_basis}
  A semi-Riemannian manifold $(S,\g)$ of dimension $\dim \geq 1$
  admits a {\bf Hessian basis} if there exist $\dim+2$ real functions $\{\YyS^{\ta}\}
  = \{ \YyS^0,\YyS^A,\YyS^{\dim+1}\}$ on $S$ that solve 
\begin{align}
&  \Hess \YyS^{\ta}= \left( -\curvS \YyS^{\ta} + \delta^{\ta}_{\dim+1}\right) \gS \label{Hesseq}\\
& \YyS^{\ta} |_{o} = \delta^{\ta}_0, \quad \qquad
X_A(\YyS^{\ta}) |_{o} = \delta^{\ta}_A. \label{boundary}
\end{align}
where $\Hess$ is the Hessian with respect to $\gS$, $\curvS \in \mathbb{R}$,
$o$ is  any point $o \in S$ and $\{X_A\}$ is any basis of $T_o S$.
Observe that given $\{X_A\}$, the Hessian basis is unique.
\end{definition}

In this definition $(S,\g)$ is not assumed to be of constant
curvature $\curvS$. However, in Remark \ref{constant} below we show that
whenever $\dim \geq 2$ this is a necessary and (locally) sufficient requirement for the existence of a Hessian basis.

The following result constructs the coordinates $\{x^A\}$ mentioned above.
\begin{lemma}
\label{properties}
  Let $(S,\g)$ be connected and admit a Hessian basis $\{ \YyS^{\ta}\}$. Let  $\g^{AB}$ stand for the contravariant metric of $\g$ and define 
the following two 
  $(\dim+2) \times (\dim+2)$  matrices of functions
 ($\vec{0}$ is a column with $m$ zeroes
and $\vec{0}^{\,T}$ its transpose, $\ta$ is a row index and
$\tb$ a column index)
  \begin{align*}
C^{\ta \tb} :=    \left ( \begin{array}{ccc}
              \curvS & \vec{0}^{\,T} & -1 \\
                            \vec{0} & \gS^{AB} & \vec{0} \\
                            -1 & \vec{0}^{\,T} & 0
                          \end{array}
                                          \right ), 
\qquad     A^{\ta}{}_{\tb}  := 
\left ( \begin{array}{ccc}
                                      \YyS^{\ta} & D_A \YyS^{\ta} &
\delta^{\ta}_{\dim+1} 
\end{array} \right ).
  \end{align*}
Then,
\begin{itemize}
\item[(i)] The functions  $Q^{\ta\tb} := A^{\ta}{}_{\tc} A^{\tb}{}_{\td} C^{\tc\td}$ 
are constant on $S$ and, in fact, 
  \begin{align}
Q^{\ta \tb} :=    \left ( \begin{array}{ccc}
              \curvS & \vec{0}^{\,T} & -1 \\
                            \vec{0} & h^{AB} & \vec{0} \\
                            -1 & \vec{0}^{\, T} & 0
                          \end{array}
                                          \right ),
\label{Qexp}
\end{align} 
  where $h^{AB}$ is the inverse of $h_{AB}$. Moreover, the
  matrix $A^{\alpha}{}_{\beta}$ is invertible at every point.
\item[(ii)] The following algebraic relations hold:
\begin{align}
\YyS^0 + \curvS \YyS^{\dim+1} &= 1, \label{algeb1}\\
\ysq
- 2 \YyS^{\dim+1} + \curvS \left ( \YyS^{\dim+1} \right )^2 & =0, \label{algeb2}
\end{align}
where $\ysq := h_{AB} \YyS^A \YyS^B$.
\item[(iii)] The metric $\g$ can be written everywhere as 
\begin{align}
  \g = h_{AB} d \YyS^A \oti d \YyS^B + \curvS d \YyS^{\dim+1} \oti d \YyS^{\dim+1}.
  \label{induced}
\end{align}
\item[(iv)] The functions
  \begin{align}
    x^A := \frac{\YyS^A}{1- \frac{1}{2} \curvS \YyS^{\dim+1}} \label{defxa}
  \end{align}
  are well-defined on $S':=S \setminus S_0$, where $S_0 := \{ \ysq  =0\} \cap \{\YyS^{\dim+1} \neq 0\}$. At every point in $S'$
  it holds
\begin{align}
1 + \frac{\curvS}{4} \xsq  \neq 0, \qquad
\gS  = \frac{1}{\left ( 1 + \frac{\curvS}{4} \xsq \right )^2} h_{AB} dx^A \oti dx^B,  \label{confflat}
\end{align}
where 
$\xsq := h_{AB} x^A x^B$.
In particular $\{ x^A\}$ defines a local coordinate system on $S'.$
\end{itemize}
\end{lemma}

\begin{remark}\label{embedding}
  The usual description of constant curvature spaces
  as embeddings in  $\mathbb{R}^{n+1}$ are recovered as follows.
  Observe that \eqref{algeb1} and \eqref{algeb2} imply
  \begin{align}
    \label{graph}
    \curvS\ysq + (\YyS^0)^2=1.
  \end{align}
  Moreover $d \YyS^{0} = -\kappa d\YyS^{n+1}$. Thus, when $\kappa\neq0$ the metric \eqref{induced} can be viewed as the induced metric of the surface defined by
  \eqref{graph} embedded in  $\mathbb{R}^{n+1}$ with the metric
  $h_{AB} d\YyS^A d\YyS^B + \kappa^{-1} (d\YyS^0)^2$ (when $\kappa=0$ we can simply take the Euclidean metric $h_{AB} dy^A dy^B + (d\YyS^0)^2$ since the graph is now simply $\YyS^0=\pm 1$).
  Note that coordinating the surface with $\{ \YyS^A\}$ (so that it is described as a graph in the coordinate $\YyS^0$) does not cover the whole surface, but just a hemisphere, when $\kappa > 0$. Note also that viewing $\gamma$ as an induced metric  in a one-dimensional higher space requires a different treatment depending on the value of $\kappa$.
\end{remark}

\begin{remark}\label{res:stereo_cords}
  The functions $\{ x^A \}$ of this lemma are called {\it stereographic coordinates centered at $o$ with frame $\{X_A\}$}. Given the frame $\{X_A\}$, the functions $\{ x^A \}$ are unique.
\end{remark}

\begin{proof}
The definition of $Q^{\ta\tb}$ entails
\begin{align}
Q^{\ta\tb} =  \g \left (\gradS \YyS^{\ta}, 
\gradS \YyS^{\tb} \right ) + \curvS \YyS^{\ta} \YyS^{\tb}
- \YyS^{\ta}\delta^{\tb}_{\dim+1}
- \YyS^{\tb}\delta^{\ta}_{\dim+1}, \label{Qalt}
\end{align}
and its gradient is (in abstract index notation and letting
$D_A$ be the Levi-Civita covariant derivative to $\gS$) 
\begin{align*}
D_A Q^{\ta\tb} = D_A D_B \YyS^{\ta} D^B \YyS^{\tb} + 
D_A D_B \YyS^{\tb} D^B \YyS^{\ta} 
+ D_A \YyS^{\ta} \left ( 
\curvS \YyS^{\tb} - \delta^{\tb}_{\dim+1} \right )
+ D_A \YyS^{\tb} \left ( 
\curvS \YyS^{\ta} - \delta^{\ta}_{\dim+1} \right ) =0
\end{align*}
after inserting equation  \eqref{Hesseq}.
So $Q^{\ta\tb}$ is constant on $S$.
At $o$ we have $\YyS^{\ta} = \delta^{\ta}_0$ and
\begin{align*}
\g|_o \left (\gradS \YyS^{\ta}, \gradS \YyS^{\tb} \right ) 
= h^{AB} X_A (\YyS^{\ta}) X_B (\YyS^{\tb} ) = h^{AB}\delta^{\ta}_A \delta^{\tb}_B 
\end{align*}
because of \eqref{boundary}. Evaluating \eqref{Qalt} at $o$ gives \eqref{Qexp}.

Now, the matrix $C^{\ta\tb}$ is invertible everywhere (its determinant is up to a sign that of the contravariant metric $\g^{\sharp}$). $Q^{\ta\tb}$ is also invertible because of \eqref{Qexp}. This means that $A^{\ta}{}_{\tb}$ is invertible at every point. Writing
the inverses
of $C^{\ta\tb}$, $Q^{\ta\tb}$ as $C_{\ta\tb}$, $Q_{\ta\tb}$ one gets
\begin{align}
Q_{\ta\tb} A^{\ta}{}_{\tc} A^{\tb}{}_{\td} = C_{\tc \td}
\label{QAAC}
\end{align}
as an immediate consequence  of the definition of $Q^{\ta\tb}$ in item (i). The inverses of $C^{\ta\tb}$ and
$Q^{\ta\tb}$ are
\begin{align*}
C_{\ta \tb} :=    \left ( \begin{array}{ccc}
              0 & \vec{0} & -1 \\
                            \vec{0}^{\,T} & \gS_{AB} & \vec{0}^{\,T} \\
                            -1 & \vec{0} & - \curvS
                          \end{array}
                                          \right ), 
\qquad 
Q^{\ta \tb} :=    \left ( \begin{array}{ccc}
              0 & \vec{0} & -1 \\
                            \vec{0}^{\,T} & h_{AB} & \vec{0}^{\,T} \\
                            -1 & \vec{0} & -\curvS
                          \end{array}
                                          \right ).
\end{align*}
Evaluating \eqref{QAAC} at the values $\tc =0, \td = \dim+1$ gives
\eqref{algeb1} and at the values $\tc = \td =0$ gives 
\begin{align*}
h_{AB} \YyS^A \YyS^B - 2 \YyS^0 \YyS^{\dim+1} - \curvS (\YyS^{\dim+1})^2 =0,
\end{align*}
which becomes \eqref{algeb2} after inserting \eqref{algeb1}.
Evaluating \eqref{QAAC}  at $\tc = C$ and $\td = D$ provides
\begin{align*}
& \g_{CD} = h_{AB} D_C \YyS^A D_D \YyS^B 
- 2 D_C \YyS^0 D_D  \YyS^{\dim+1}
- 2 D_C \YyS^{\dim+1} D_D  \YyS^{0}
- \curvS D_C \YyS^{\dim+1}  D_D \YyS^{\dim+1}
\quad \\
& \Longleftrightarrow
\quad
\g = h_{AC} d \YyS^A \oti d \YyS^{B}
- d \YyS^0 \oti d \YyS^{\dim+1}
- d \YyS^0 \oti d \YyS^{\dim+1}
- \curvS d \YyS^{\dim+1} \oti d \YyS^{\dim+1}
\end{align*}
which proves item (iii) after using
$d\YyS^0 = - \curvS d\YyS^{\dim+1}$ (cf. \eqref{algeb1}).

For the last item we first note that $x^A$ is well-defined
at every point with $\YyS^{\dim+1} =0$ (in particular at $o$, 
where $x^A|_o =0$). At points where $\YyS^{\dim+1} \neq 0$ we can write $x^A$ alternatively as
\begin{align}
x^A =\frac{\YyS^A}{1- \frac{1}{2} \curvS \YyS^{\dim+1}}
=  \frac{\YyS^{\dim+1} \YyS^A}{\YyS^{\dim+1}- \frac{1}{2} \curvS (\YyS^{\dim+1})^2}
=  \frac{2 \YyS^ {\dim+1} \YyS^{A}}{\ysq}, \label{xalt}
\end{align}
where in the last equality we inserted \eqref{algeb2}. This expression shows that $x^A$ is well-defined everywhere except on the set $S_0$  defined in item (iv). We next prove that
\begin{align}
  1 + \frac{\curvS}{4} \xsq 
  = \frac{1}{1 - \frac{\curvS}{2} \YyS^{\dim+1}} \qquad
  \mbox{on } \quad S \setminus S_0.
\label{xsq}
\end{align}
At any point where $\YyS^{\dim+1}=0$ this is immediate because $x^A = \YyS^A$ and 
$\ysq =0$ (cf. \eqref{algeb2}). At any point
where $\YyS^{\dim+1} \neq 0$ we use \eqref{xalt}
\begin{align*}
1 + \frac{\curvS}{4} \xsq =  1 + \frac{\curvS (\YyS^{\dim+1})^2}
{\ysq}
= \frac{1}{\ysq} \left ( 
\ysq + \curvS (\YyS^{\dim+1})^2 \right ) \stackrel{\eqref{algeb2}}{=} \frac{2 \YyS^{\dim+1}}{2 \YyS^{\dim+1} - \curvS (\YyS^{\dim+1} )^2},
\end{align*}
which is \eqref{xsq}. It only remains to prove
that the right-hand-side is finite at any $p \in S \setminus S_0$ (the fact that it is non-zero is obvious). By definition of $S_0$ 
either $\ysq|_p  \neq 0$ or $\YyS^{\dim+1}|_p =0$. In either case $1 - \frac{\curvS}{2} \YyS^{\dim+1} \neq 0$ (recall \eqref{algeb2})  and finiteness follows.
Thus, \eqref{confflat} is equivalent to 
\begin{align}
  h_{AB} dx^A \oti dx^A = \frac{1}{\left ( 1- \frac{\curvS}{2} \YyS^{\dim+1} \right )^2} \gS. \label{confflat2}
\end{align}
Computing $dx^A$ from \eqref{defxa}
and using $h_{AB} \YyS^A d\YyS^B = \left ( 1 - \curvS \YyS^{\dim+1} \right )d\YyS^{\dim+1}$ (i.e. the differential of \eqref{algeb2}), \eqref{confflat2} follows by direct computation.

The last statement of the lemma is immediate from the form \eqref{confflat}. Explicitly, let $p$ be any point on $S \setminus S_0$ and $\{ z^A\}$ a coordinate system near $p$. We work on the domain of this chart. The Jacobian matrix
$J^A{}_{B} := ( \frac{\partial x^A}{\partial z^B} )$ must have non-zero
determinant  because ($\gamma^{(z)}$ stands for the components of the metric in the chart $\{z^A\}$)
\begin{align*}
\g^{(z)}_{CD} = \frac{1}{\left ( 1+ \frac{\curvS}{4} \xsq \right )^2}
h_{AB} J^{A}{}_C J^{B}{}_D.
\end{align*}
\end{proof}
\begin{remark}
  \label{constant}
  When $\dim \geq 2$ the existence of a Hessian basis requires that $(S,\gS)$ is of constant curvature $\curvS$. Indeed, by the Ricci identity
  \begin{align*}
    -R^E{}_{ABC}D_E \YyS^{\alpha} =
    D_B D_C D_A \YyS^{\alpha}
    - D_C D_B D_A \YyS^{\alpha}
    = \curvS \left ( \gS_{CA} \delta^E{}_B- \gS_{BA} \delta^E{}_C \right ) D_E \YyS^{\alpha},
  \end{align*}
 and the result follows because the $(\dim+2)\times \dim$ matrix  $\left ( D_F \YyS^{\alpha} \right )$ has rank $\dim$ (if it had lower rank, then
  $A^{\alpha}{}_{\beta}$ would not be invertible, against item (i)
  in Lemma \ref{properties}).

  Conversely, if $(S,\gS)$ has $\dim \geq 2$
  and constant curvature $\curvS$,
  then there exists a sufficiently small neighbourhood around any point $p$  
  (with the induced metric) that admits a Hessian basis, as we show next.
  \begin{proof}[Proof of existence of a Hessian basis]
      The integrability conditions of the equation $\Hess f=F\gS$
      with $F=-\curvS f+q_0$ for $q_0\in\mathbb{R}$
      are identically satisfied because
      \begin{align*}
        &\gS_{CA} D_BF-\gS_{BA}D_CF-2D_{[B}D_{C]}D_A f=\gS_{CA} D_BF-\gS_{BA}D_CF-R^E{}_{ABC}D_Ef\\
        &=\gS_{CA} D_BF-\gS_{BA}D_CF+\curvS \left ( \gS_{CA} \delta^E{}_B - \gS_{BA} \delta^E{}_C \right )D_Ef=0.
      \end{align*}
      Therefore, fixing any $p\in S$ there is one solution for each choice of $f|_p$ and $df|_p$,
      that is $\dim+1$ constants. This yields $\dim +1$ linearly independent solutions.
      These solutions exist in a suitable small neighbourhood of $p$.
    \end{proof}
  \end{remark}

For use in the main text, we next show that on $S\setminus S_0$ the Hessian basis $\{ \YyS^{\alpha}\}$ can be written in
terms of $x^A$ as follows
\begin{align}
  \YyS^A =
  \frac{x^A}{1 +\frac{\curvS}{4} \xsq},
  \qquad
  \YyS^{\dim+1} = \frac{\xsq}{2
    \left (1 +\frac{\curvS}{4} \xsq\right )}, \qquad
  \YyS^0
  = \frac{1 -\frac{\curvS}{4} \xsq}{1 +\frac{\curvS}{4} \xsq}. \label{Yinx}
\end{align}
The first equality follows directly from the \eqref{defxa} and \eqref{xsq}.  The second is obtained by squaring \eqref{defxa}
\begin{align*}
  \xsq = \frac{\ysq}{
    \left ( 1 - \frac{\curvS}{2} \YyS^{\dim+1}
    \right )^2}
  = \frac{2 \YyS^{\dim+1}}{
    \left ( 1 - \frac{\curvS}{2} \YyS^{\dim+1}
    \right )} = 2 \YyS^{\dim+1}
 \left (1 +\frac{\curvS}{4} \xsq\right ),
\end{align*}
where in the second equality we used
\eqref{algeb2} and in the third \eqref{xsq}.
The equality for $\YyS^{0}$ follows from  \eqref{algeb1}.

The last result of the Appendix gives properties of the conformal Killing algebra of $(S,\gS)$ that are needed in the main text. Although the explicit
expressions \eqref{ckvl_S}-\eqref{killsl_S} are of course known, we include the derivation in order to make the paper self-contained and because
its construction is almost immediate when one uses Hessian basis (see
\cite{MarsPeonNieto} for a related construction).
\begin{lemma}\label{res:CKVs}
  Let $(S,\gS)$ be of dimension $\dim \geq 2$ and
  admit  a Hessian basis $\{ \YyS^{\alpha}\}$. Then, the vector
  fields
  \begin{equation}
  \ckvS^{\ta\tb}\defi\YyS^{\ta} \gradS \YyS^{\tb}-\YyS^{\tb} \gradS \YyS^{\ta},
  \label{def:zeta_S}
  \end{equation}
    are conformal Killing vectors of $\gS$ satisfying
  \begin{align}
  \pounds_{\ckvS^{\ta\tb}}\gS=
    2(\YyS^{\ta}\delta^{\tb}_{\dim+1}-\YyS^{\tb}\delta^{\ta}_{\dim+1})\gS.
  \label{eq:Lie_zeta}
  \end{align}
   Moreover, if $\dim \geq 3$ then the set $\{ \ckvS^{\ta\tb}, \ta <\tb\}$ is a basis of the conformal Killing algebra of $(S,\gS)$. On $S \setminus S_0$ they read,  in the local coordinates
  $\{ x^A\}$,
\begin{align}
  &\ckvS^{0\, \dim+1}=\lS^A\partial_A,\qquad
    && \ckvS^{A\, \dim+1}
    =\left(\lS^A\lS^B-\frac{1}{2}\lS^2 h^{AB} \right)\partial_B,\label{ckvl_S}\\
    &\ckvS^{AB}= \left ( \lS^A h^{BC}- \lS^B h^{AC} \right )  \partial_C,\quad
    && \ckvS^{0A}  = h^{AB} \partial_B+\frac{\curvS}{2} \ckvS^{A\, \dim+1}.\label{killsl_S}
\end{align}
  \end{lemma}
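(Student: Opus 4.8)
The plan is to prove the three claims of the lemma in order: the conformal Killing identity \eqref{eq:Lie_zeta}, the explicit coordinate expressions \eqref{ckvl_S}--\eqref{killsl_S}, and the basis statement for $\dim\geq 3$.

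First I would establish \eqref{eq:Lie_zeta} by a direct computation. Lowering an index in \eqref{def:zeta_S} gives $(\ckvS^{\ta\tb})_A = \YyS^{\ta} D_A \YyS^{\tb} - \YyS^{\tb} D_A \YyS^{\ta}$, with $D$ the Levi-Civita connection of $\gS$. Differentiating once more and inserting the Hessian equation \eqref{Hesseq} in the form $D_A D_B \YyS^{\ta} = (-\curvS \YyS^{\ta} + \delta^{\ta}_{\dimu})\gS_{AB}$, the expression $D_B(\ckvS^{\ta\tb})_A$ splits into the part $D_B \YyS^{\ta} D_A \YyS^{\tb} - D_B \YyS^{\tb} D_A \YyS^{\ta}$, which is antisymmetric under $A\leftrightarrow B$ and therefore disappears from $(\pounds_{\ckvS^{\ta\tb}}\gS)_{AB} = D_A(\ckvS^{\ta\tb})_B + D_B(\ckvS^{\ta\tb})_A$, plus a term proportional to $\gS_{AB}$. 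Collecting the latter yields $2(\YyS^{\ta}\delta^{\tb}_{\dimu} - \YyS^{\tb}\delta^{\ta}_{\dimu})\gS_{AB}$, which is exactly \eqref{eq:Lie_zeta}; in particular each $\ckvS^{\ta\tb}$ is a conformal Killing vector, and those with neither index equal to $\dimu$ are Killing.

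Next I would read off the coordinate expressions on $S\setminus S_0$. The idea is to substitute the formulae \eqref{Yinx} for the Hessian basis in terms of $\lS^A$ into the definition \eqref{def:zeta_S}, using the conformal form \eqref{confflat} of the metric so that $\gradS \YyS^{\ta} = (1+\frac{\curvS}{4}\xsq)^2 h^{AB}(\partial_B \YyS^{\ta})\,\partial_A$. A short calculation then gives the three building blocks $\gradS\YyS^{\dimu} = \lS^A\partial_A$, $\gradS\YyS^0 = -\curvS\,\lS^A\partial_A$ (the latter also following from $\YyS^0 = 1 - \curvS\YyS^{\dimu}$, cf. \eqref{algeb1}), and $\gradS\YyS^A = (1+\frac{\curvS}{4}\xsq)h^{AB}\partial_B - \frac{\curvS}{2}\lS^A\lS^B\partial_B$. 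Inserting these into \eqref{def:zeta_S} and simplifying with \eqref{algeb1}, \eqref{algeb2}, and the identity $1+\frac{\curvS}{4}\xsq = (1-\frac{\curvS}{2}\YyS^{\dimu})^{-1}$ of \eqref{xsq} produces \eqref{ckvl_S}--\eqref{killsl_S}: the cubic terms in $\ckvS^{AB}$ and $\ckvS^{A\,\dimu}$ cancel, and regrouping the quadratic pieces gives the stated relation $\ckvS^{0A} = h^{AB}\partial_B + \frac{\curvS}{2}\ckvS^{A\,\dimu}$. This is the most calculation-intensive step, but every manipulation is routine substitution.

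Finally I would show that $\{\ckvS^{\ta\tb}\}_{\ta<\tb}$ is a basis. There are $\binom{\dim+2}{2} = (\dim+1)(\dim+2)/2$ such fields, which equals the dimension of the conformal algebra of an $\dim$-dimensional conformally flat space when $\dim\geq 3$; since a constant-curvature metric is conformally flat and this is the maximal possible dimension, it suffices to prove $\mathbb{R}$-linear independence. I would obtain this from \eqref{ckvl_S}--\eqref{killsl_S} by separating powers of $\lS$ in a hypothetical relation $\sum_{\ta<\tb} c_{\ta\tb}\ckvS^{\ta\tb}=0$: the degree-zero part is $c_{0A}h^{AB}\partial_B$, forcing $c_{0A}=0$; the degree-one part then consists of the trace field $c_{0\,\dimu}\lS^A\partial_A$ together with the $\gS$-antisymmetric rotations $c_{AB}$, which are independent, forcing $c_{0\,\dimu}=c_{AB}=0$; and the surviving degree-two field $\sum_A c_{A\,\dimu}\ckvS^{A\,\dimu}$, upon contracting its $\partial_C$-coefficient with $\lS_C$, reduces to $\tfrac12\xsq\sum_A c_{A\,\dimu}\lS^A$, forcing $c_{A\,\dimu}=0$. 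The main obstacle is not any single hard argument but keeping the second step organized so that \eqref{algeb1}--\eqref{algeb2} are applied cleanly; the one genuinely structural point is that the counting in the third step requires $\dim\geq 3$, precisely because only then is the conformal algebra of a conformally flat space finite dimensional and equal to $(\dim+1)(\dim+2)/2$ (for $\dim=2$ it is infinite dimensional, so this argument would fail). Note finally that although \eqref{ckvl_S}--\eqref{killsl_S} hold only on $S\setminus S_0$, a globally vanishing combination restricts to this open set, so independence there is enough.
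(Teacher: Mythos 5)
Your proposal is correct and follows essentially the same route as the paper: the Lie-derivative identity \eqref{eq:Lie_zeta} via the Hessian equation \eqref{Hesseq} (with the antisymmetric gradient terms dropping out of the symmetrization), the coordinate expressions by computing $\gradS\YyS^{\ta}$ from \eqref{Yinx} and substituting into \eqref{def:zeta_S}, and the basis claim by the dimension bound $(\dim+1)(\dim+2)/2$ for $\dim\geq 3$ plus linear independence on the open set $S\setminus S_0$. The only difference is that you spell out the linear-independence check by separating polynomial degrees in $\lS^A$, a detail the paper declares ``immediate''; your degree-by-degree argument is a valid way to carry it out.
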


\begin{proof}
  Using square brackets for antisymmetrization and brackets for symmetrization, equation \eqref{Hesseq} implies
  \begin{align*}
2 D_{(A} \ckvS_{B)}^{\ta\tb} =   4  D_{(A} (\YyS^{[\ta} D_{B)}\YyS^{\tb]}) = 4 D_{(A}\YyS^{[\ta}D_{B)}\YyS^{\tb]}
+ 4 \YyS^{[\ta}\delta^{\tb]}_{\dim+1} \gS_{AB}
= 4 \YyS^{[\ta}\delta^{\tb]}_{\dim+1} \gS_{AB},
  \end{align*}
which is just another way of writing \eqref{eq:Lie_zeta}. The gradients of $\{\YyS^{\alpha}\}$ on $S \setminus S_0$
in the local coordinates $\{ x^A\}$ 
are computed immediately from \eqref{Yinx} after noting that
\begin{align*}
  \gradS x^A = \left (1 + \frac{\curvS}{4} \xsq \right )^2 h^{AB}
  \partial_B, \qquad
  \gradS \xsq = 2
  \left (1 + \frac{\curvS}{4} \xsq \right )^2 x^B \partial_{B}.
\end{align*}
The result is
\begin{align}
  &\gradS \YyS^0= - \curvS\lS^B\partial_B,
    \quad
    \gradS \YyS^A=\left(\left(1+\frac{\curvS}{4}\lS^2\right)h^{AB}
    -\frac{\curvS}{2}
    \lS^A  \lS^B\right)\partial_B, \quad 
    \gradS \YyS^{\dim+1}=\lS^B\partial_B\label{grads_Yil_S}.
\end{align}
Inserting \eqref{Yinx} and \eqref{grads_Yil_S} into \eqref{def:zeta_S} yields
\eqref{ckvl_S}-\eqref{killsl_S} after a simple computation.
It only remains to show that $\{\ckvS^{\ta\tb}; \ta < \tb\}$ is a basis. Since this set has $(\dim+1)(\dim+2)/2$  elements and this is the maximal  dimension of a conformal Killing algebra in dimension $\dim \geq 3$, it suffices to prove that they are linearly independent vector fields. Actually, it  suffices to prove this on some non-empty open set.
It is immediate to check that \eqref{ckvl_S}-\eqref{killsl_S} are linearly independent on $S \setminus S_0$, so the result follows.
\end{proof}

\bibliography{references_v7}{}
\bibliographystyle{myieeetr}

\end{document}